\newcommand\labelA[1]{\label{#1}}
\newcommand\comment[1]{}
\newtheorem{conj}{Conjecture}
\newtheorem{thm}[conj]{Theorem}
\newtheorem{cor}[conj]{Corollary}
\newtheorem{prop}[conj]{Proposition}
\newtheorem{definition}[conj]{Definition}
\newtheorem{thmref}{Theorem}
\def\paper{paper }
\newcommand{\topr}{\,\mathop{\longrightarrow}\limits^{P}\,}
\newcommand{\eqa}{\,\mathop{=}\limits^{l}\,}
\newcommand{\leqa}{\,\mathop{\leq}\limits^{l}\,}
\newcommand{\ga}{\,\mathop{>}\limits^{l}\,}
\def\reals{{\mathbb R}}
\begin{document}


\title{Reconstruction of a Low-rank Matrix in the Presence of Gaussian Noise}
\author{Andrey Shabalin and Andrew Nobel}

\maketitle

\comment{

\author{\fnms{Andrey} \snm{Shabalin}\ead[label=e1]{shabalin@email.unc.edu}}
\address{\printead{e1}}
\and
\author{\fnms{Andrew} \snm{Nobel}\ead[label=e2]{nobel@email.unc.edu}}
\address{\printead{e2}}
\affiliation{University of North Carolina at Chapel Hill}

\runauthor{A. Shabalin and A. Nobel}
}

\begin{abstract}
In this paper we study the problem of reconstruction of a low-rank matrix observed with additive Gaussian noise.
First we show that under mild assumptions (about the prior distribution of the signal matrix) we can restrict our attention to reconstruction methods that are based on the singular value decomposition of the observed matrix and act only on its singular values (preserving the singular vectors).
Then we determine the effect of noise on the SVD of low-rank matrices by building a connection between matrix reconstruction problem and spiked population model in random matrix theory. 
Based on this knowledge, we propose a new reconstruction method, called RMT, that is designed to reverse the effect of the noise on the singular values of the signal matrix and adjust for its effect on the singular vectors. 
With an extensive simulation study we show that the proposed method outperform even oracle versions of both soft and hard thresholding methods and closely matches the performance of a general oracle scheme.
\end{abstract}





\section{Introduction%
\labelA{sec:intro}}

Existing and emerging technologies provide scientists with access to a growing wealth of data.
Some data is initially produced in the matrix form, while other can be represented in a matrix form once the data from multiple samples is combined.
The data is often measured with noise due to limitations of the data generating technologies.
To reduce the noise we need some information about the possible structure of signal component. In this paper we assume the signal to be a low rank matrix.
Assumption of this sort appears in multiple fields of study including genomics \citep{
raychaudhuri2000principal,
alter2000singular,
holter2000fundamental,
wall2001svdman,
troyanskaya2001missing}, compressed sensing \citep{candes2006robust,
candes2008exact,
donoho2006compressed}, and image denoising (\citeauthor{wongsawat0000multichannel}; \citealp{konstantinides1997noise}). 
In many cases the signal matrix is known to have low rank. 
For example, a matrix of squared distances between points in $d$-dimensional Euclidean space is know to have rank at most $d+2$.
A correlation matrix for a set of points in $d$-dimensional Euclidean space has rank at most $d$.
In other cases the target matrix is often assumed to have low rank, or to have a good low-rank approximation. 

In this paper we address the problem of recovering a low rank signal matrix whose entries are observed in the presence of additive Gaussian noise.
The reconstruction problem considered here has a signal plus noise structure.
Our goal is to recover an unknown $m \times n$ matrix $A$ of low rank that is observed in the presence of i.i.d. Gaussian noise as matrix $Y$:
\begin{equation*}
\labelA{eq:modelvar}
Y \, = \, A + \frac{\sigma}{\sqrt{n}} W, 
\qquad \mbox{ where } \ W_{ij} \, \sim \, \mbox{ i.i.d.\ } N(0,1).
\end{equation*}
The factor $n^{-1/2}$ ensures that the signal and noise are comparable, and is
employed for the asymptotic study of matrix reconstruction in Section \ref{sec:asy}.
In what follows, we first consider the variance of the noise $\sigma^2$ to be known, and assume that it is equal to one.
(In Section \ref{sec:noisevar} we propose an estimator for $\sigma$, which we use in the proposed reconstruction method.)
In this case the model (\ref{eq:modelvar}) simplifies to
\begin{equation}
\labelA{eq:model}
Y \, = \, A + \frac{1}{\sqrt{n}} W,
\qquad \mbox{ where } \ W_{ij} \, \sim \, \mbox{ i.i.d.\ } N(0,1).
\end{equation}

Formally, a matrix recovery scheme is a map 
$g$\,$: \reals^{m \times n} \to \reals^{m \times n}$ from the 
space of $m \times n$ matrices to itself.  Given a recovery scheme $g(\cdot)$ and
an observed matrix $Y$ from the model (\ref{eq:model}), we regard 
$\widehat A = g(Y)$ as an estimate of $A$,
and measure the performance
of the estimate $\widehat A$ by 
\begin{equation}
\labelA{eq:loss}
\mbox{Loss} (A,\widehat A)
\, = \,
\| \widehat A - A \|_F^2,
\end{equation}
where $\|\cdot\|_F$ denotes the Frobenius norm. The Frobenius norm of an $m \times n$ matrix 
$B = \{ b_{ij} \}$ is given by
\begin{equation*}
\| B \|_F^2 \, = \, \sum_{i=1}^m \sum_{j=1}^n b_{ij}^2 .
\end{equation*}
Note that if the vector space $\reals^{m \times n}$ is equipped with the inner
product $\langle A, B \rangle = \mbox{tr}(A' B)$, then 
$\| B \|_F^2 = \langle B, B \rangle$.

\subsection{Hard and Soft Thresholding}

A natural starting point for reconstruction of the target matrix $A$ in (\ref{eq:model}) is the singular value decomposition (SVD) of the observed matrix $Y$.  
Recall that the singular value decomposition of an $m \times n$ matrix $Y$ is given by the factorization
\[
Y
\, = \,
U D V'
\, = \,
\sum_{j=1}^{m \wedge n} d_j u_j v_j'
.
\]
Here 
$U$ is an $m \times m$ orthogonal matrix whose columns are the left singular vectors $u_j$,
$V$ is an $n \times n$ orthogonal matrix whose columns are the right singular vectors $v_j$, and 
$D$ is an $m \times n$ matrix with singular values $d_j = D_{jj} \geq 0$ on the diagonal and all other entries equal to zero.  
Although it is not necessarily square, we 
will refer to $D$ as a diagonal matrix and write 
$D = \mbox{diag}(d_1,\ldots,d_{m \wedge n})$, where $m \wedge n$ denotes the minimum of $m$ and $n$.

\comment{
Each of the methods cited above () rely on the SVD. 
The singular value decomposition is a popular tool for the analysis of data.
For instance, \cite{wall2001svdman},  \cite{alter2000singular}, and \cite{holter2000fundamental} used SVD as a tool for mining gene expression data and \cite{troyanskaya2001missing} applied SVD to impute missing values. 
SVD is also used for image denoising. However a better performance is achieved when SVD is applied to small blocks of pixels, not to the whole image. Denoising methods of \citeauthor{wongsawat0000multichannel} and \cite{konstantinides1997noise} perform SVD on square subblocks and set to zero the singular values smaller than some threshold.
}

Many matrix reconstruction schemes 
act by shrinking the singular values of the observed matrix towards zero.
Shrinkage is typically accomplished by hard or soft thresholding.
Hard thresholding schemes set every singular value of $Y$ less than 
a given positive threshold $\lambda$ equal to zero, leaving other singular values unchanged.  
The family of hard thresholding schemes is defined by
\[
g_\lambda^H(Y) \, = \, \sum_{j=1}^{m \wedge n} d_j I( d_j \geq \lambda) \, u_j v_j',
\qquad \lambda > 0 .
\]
Soft thresholding schemes subtract a given positive number $\nu$ from each singular value, setting values less than $\nu$ equal to zero.
The family of soft thresholding schemes is defined by
\[
g_\nu^S(Y) \, = \, \sum_{j=1}^{m \wedge n} ( d_j - \nu)_+ \, u_j v_j' ,
\qquad \nu > 0 .
\]
Hard and soft thresholding schemes can be defined equivalently in the penalized forms
\[
g_\lambda^H(Y)
\, = \,
\mathop{\arg\min}_{B} \big\{ \| Y - B \|_F^2 + \lambda^2 \, \mbox{rank}(B) \big\}
\]
\[
g_\nu^S(Y)
\ = \ 
\mathop{\arg\min}_{B}  \big\{ \| Y - B \|_F^2 + 2 \nu \, \|B\|_* \big\} .
\]
In the second display, $\| B \|_{*}$ denotes the nuclear norm 
of $B$, which is equal to the sum of its singular values.

\begin{figure}[htbp]
\begin{center}
\includegraphics[width = 6cm]{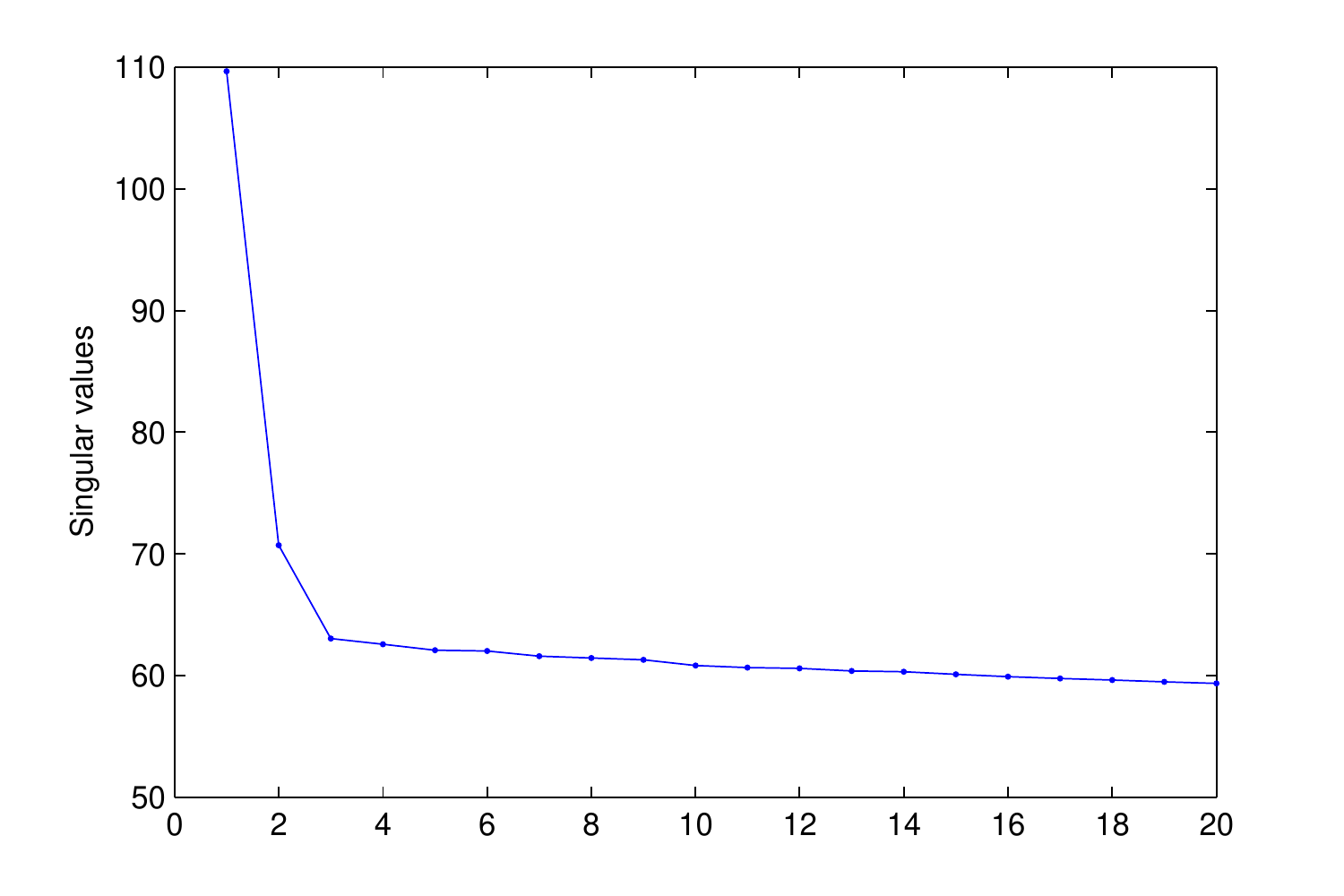}%
\caption[Scree plot example.]%
{Scree plot for a $1000 \times 1000$ rank 2 signal matrix with noise.}
\label{fig:Scree}
\end{center}
\end{figure}

In practice, hard and soft thresholding schemes require estimates of the
noise variance, as well as the selection of appropriate cutoff or 
shrinkage parameters.  
There are numerous methods in the literature for choosing the 
hard threshold $\lambda$.
Heuristic methods often make use of 
the scree plot, which displays the singular values of $Y$ 
in decreasing order: $\lambda$ is typically chosen to 
be the y-coordinate of a well defined ``elbow'' in the resulting curve.
A typical scree plot for a $1000 \times 1000$ matrix with rank 2 signal is shown in Figure~\ref{fig:Scree}.
The ``elbow'' point of the curve on the plot clearly indicate that the signal has rank 2.

A theoretically justified selection of hard threshold $\lambda$ is presented in recent work of \cite{bunea2010adaptive}. They also provide performance guaranties for the resulting hard thresholding scheme using techniques from empirical process theory and complexity regularization.  
Selection of the soft thresholding shrinkage parameter $\nu$
may also be accomplished by a variety of methods.  
\cite{negahban2009estimation} propose
a specific choice of $\nu$ and provide performance guarantees 
for the resulting soft thresholding scheme.

\begin{figure}[htbp]
\begin{center}
\includegraphics[width=6cm]{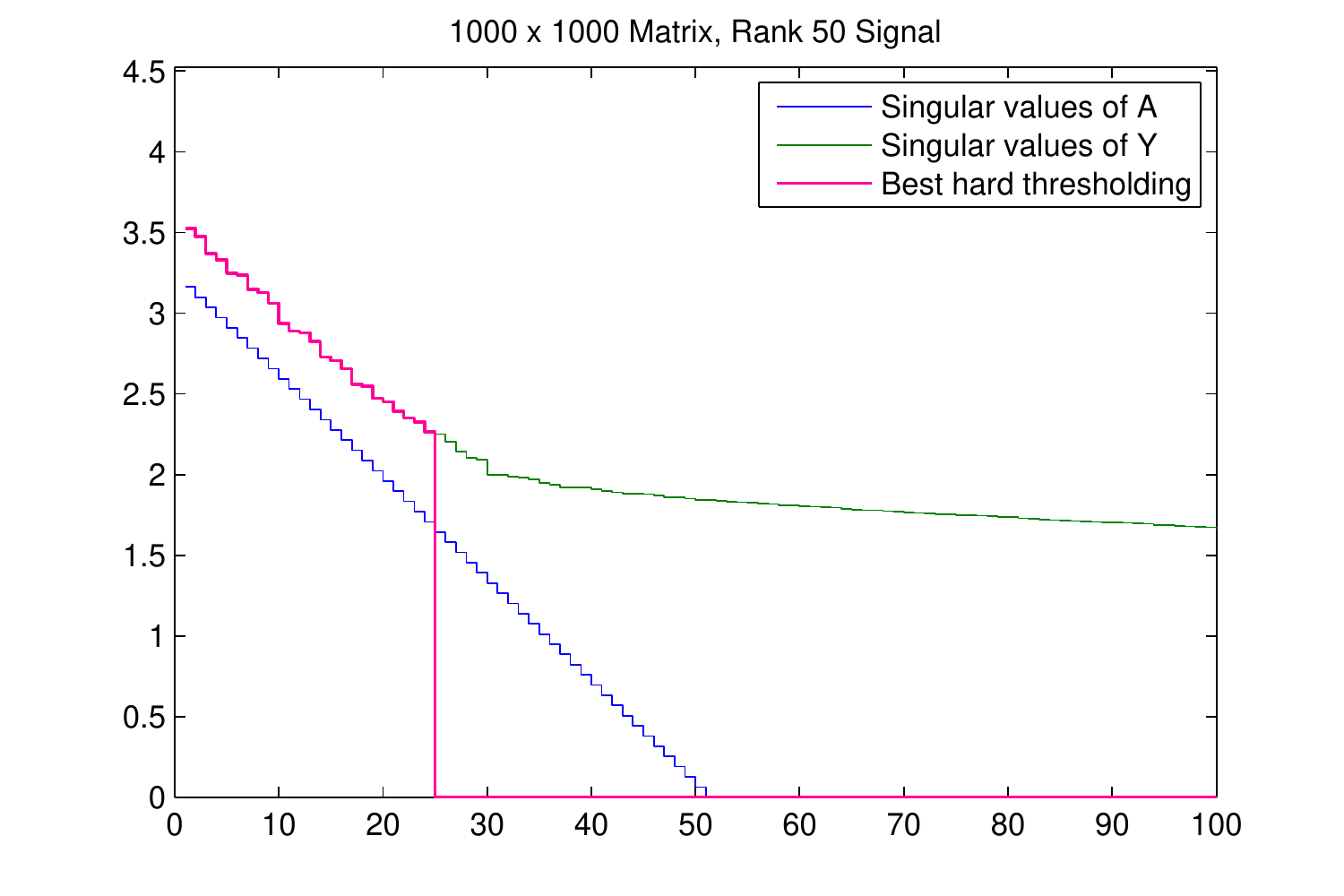}
\quad
\includegraphics[width=6cm]{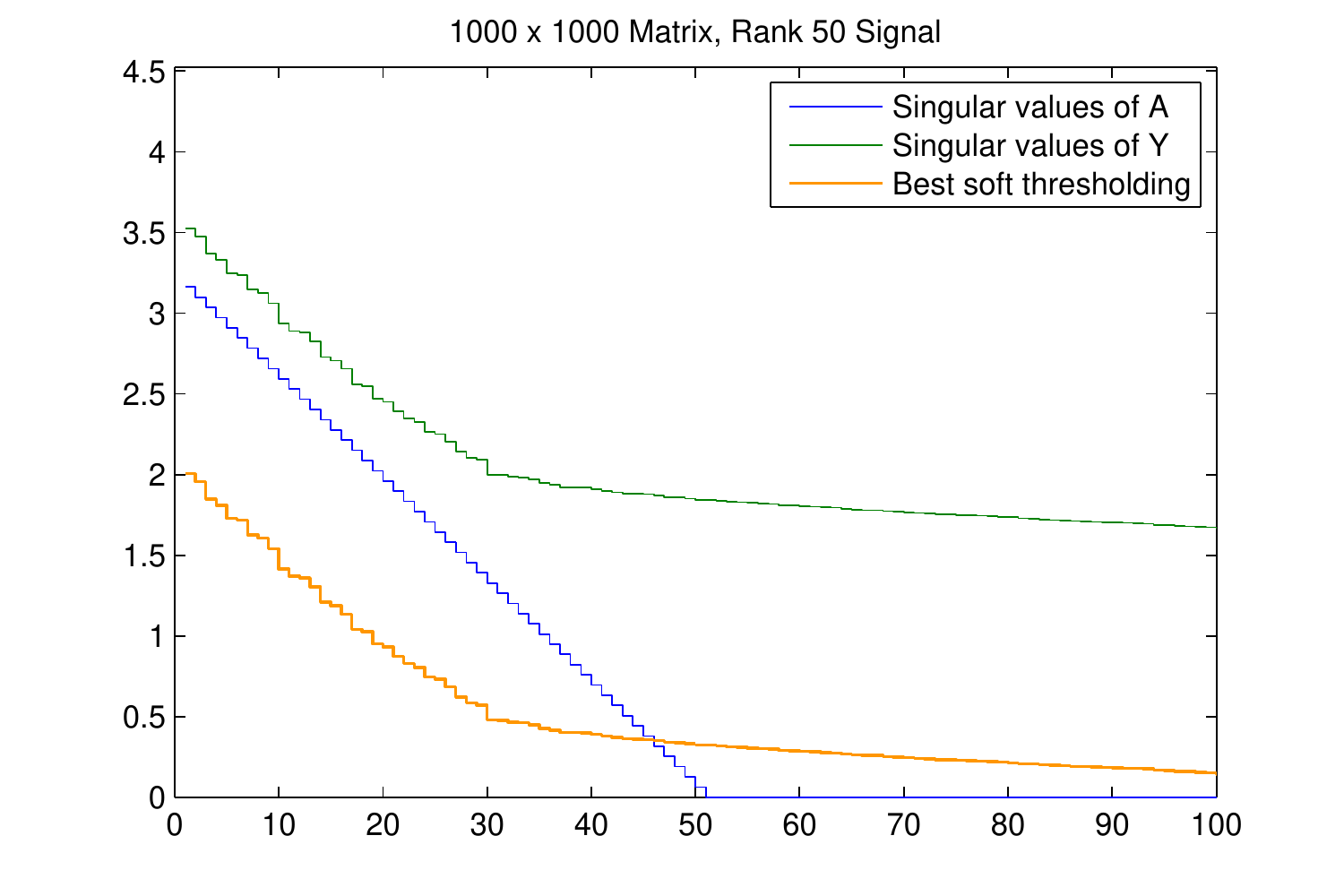}
\caption{Singular values of hard and soft thresholding estimates.}
\label{fig:sinvals}
\end{center}
\end{figure}

Figure \ref{fig:sinvals} illustrates the action of hard and soft thresholding on a $1000 \times 1000$ matrix 
with a rank 50 signal. The blue line indicates the singular values of the signal $A$ and the green 
line indicates the those of the observed matrix $Y$. 
The plots show the singular values of the hard and soft thresholding estimates 
incorporating the best choice of the parameters $\lambda$ and $\nu$, respectively. 
It is evident from the figure that neither thresholding scheme delivers an accurate estimate of the signal's singular values. Moreover examination of the loss indicates that they do not provide a good estimates of the signal matrix. 

\vskip 0.1in

The families of hard and soft thresholding methods encompass many existing reconstruction schemes.
Both thresholding approaches seek low rank (sparse) estimates of the target matrix, 
and both can be naturally formulated as optimization problems.
However, the family of all reconstruction schemes is much larger, and it is natural to consider
alternatives to hard and soft thresholding that may offer better performance. 

In this paper, we start with a principled analysis of the matrix reconstruction problem,
with the effort of making as few assumptions as possible.
Theoretically motivated design of the method.
Based on the analysis of the reconstruction problem and,
in particular, the analysis of the effect of noise on low-rank matrices we
design a new reconstruction method with a theoretically motivated design.

\subsection{Outline} 

We start the paper with an analysis of the finite sample properties of the matrix reconstruction problem.
The analysis does not require the matrix $A$ to have low rank and only requires that the distribution of noise matrix $W$ is orthogonally invariant (does not change under left and right multiplications by orthogonal matrices).
Under mild conditions (the prior distribution of $A$ must be orthogonally invariant)
we prove that we can restrict our attention to the reconstruction methods that are based on the SVD of the observed matrix $Y$ and act only on its singular values, not affecting the singular vectors.
This result has several useful consequences.
First, it reduces the space of reconstruction schemes we consider from $g:\reals^{m \times n} \to \reals^{m \times n}$ to just $\reals^{m \wedge n} \to \reals^{m \wedge n}$.
Moreover, it gives us the recipe for design of the new reconstruction scheme: we determine the effect of the noise on the singular values and the singular value decomposition of the signal matrix $A$ and then built the new reconstruction method to reverse the effect of the noise on the singular values of $A$ and account for its effect on the singular vectors.

To determine the effect of noise on low-rank signal we build a connection between the matrix reconstruction problem and spiked population models in random matrix theory. Spiked population models were introduced by \cite{johnstone2001distribution}.
The asymptotic matrix reconstruction model that matches the setup of spiked population models 
assumes the rank of $A$ and its non-zero singular values to be fixed as the matrix dimensions to grow at the same rate: $m,n \to \infty$ and $m/n \to c>0$.
We use results from random matrix theory about the limiting behavior of the eigenvalues \citep{marcenko1967distribution,wachter1978strong,geman1980limit,baik2006els} and eigenvectors 
\citep{paul2007asymptotics,nadler2008finite,lee2010convergence}
of sample covariance matrices in spiked population models to determine the limiting behavior of the singular values and the singular vectors of matrix $Y$.

We apply these results to design a new matrix reconstruction method, which we call RMT for its use of random matrix theory. The method estimated the singular values of $A$ from the singular values of $Y$ and applies additional shrinkage to them to correct for the difference between the singular vectors of $Y$ and those of $A$. The method uses an estimator of the noise variance which is based on the sample distribution of the singular values of $Y$ corresponding to 
the zero singular values of $A$. 

We conduct an extensive simulation study to compare RMT method against the oracle versions of hard and soft thresholding and the orthogonally invariant oracle.  We run all four methods on matrices of different sizes, with signals of various ranks and spectra. The simulations clearly show that RMT method strongly outperforms the oracle versions of both hard and soft thresholding methods and closely matches the performance of the orthogonally invariant oracle (the oracle scheme that acts only on the singular values of the observed matrix).

\vskip 0.1in

The paper is organized as follows. 
In Section \ref{sec:invariant} we present the analysis of the finite sample properties of the reconstruction problem.
In Section \ref{sec:asy} we determine the effect of the noise on the singular value decomposition of row rank matrices. 
In Section \ref{sec:RMTmethod} we construct the proposed reconstruction method based on the results of the previous section. The method employs the noise variance estimator presented in Section \ref{sec:noisevar}. Finally, in Section \ref{sec:simulations} we present the simulation study comparing RMT method to the oracle versions of hard and soft thresholding and the orthogonally invariant oracle method.

\comment{

We start this paper by an analysis of the matrix reconstruction problem (Section \ref{sec:invariant}). 
First we note the invariance of the matrix reconstruction problem under left and right multiplications by orthogonal matrices (Proposition \ref{thm:orthogonal}). 
This leads naturally to the definition of orthogonally invariant reconstruction schemes - schemes whose actions do not change under orthogonal multiplications (Definition \ref{def:ortinvscheme}). 
We say that a random matrix has an orthogonally invariant distribution if its distribution does not change under left and right multiplications by orthogonal matrices (Definition \ref{def:ortinvdist}).
Next we show that under mild assumptions (prior distribution of $A$ is orthogonally invariant) for any reconstruction scheme there is an orthogonally invariant reconstruction scheme with the same or smaller expected loss (Theorem \ref{thm:ortoptimality}).
Next we study the family of orthogonally invariant reconstruction methods and proof that 
they act only on the singular values of the observed matrix without changing its singular vectors (Theorem \ref{thm:DIDO}). Based on these results we restrict our attention to orthogonally invariant reconstruction methods, i.e. those that act only on the singular values of the observed matrix.

Next, in Section \ref{sec:asy} we determine the effect on noise on the singular values and singular vectors of the signal matrix
by building a connection between the matrix reconstruction problem and the spiked population model in random matrix theory. Random matrix theory is built in asymptotic settings as matrix dimensions grow, so we define matching asymptotic settings for the matrix reconstruction problem. We fix the rank of the signal matrix $A$ along with its non-zero singular values $\lambda_1(A),\ldots,\lambda_r(A)$ and let the dimensions of the matrix $m$ and $n$ grow with the same rate
\[
	m,n \,\to\, \infty \ \mbox{ and } \ \frac{m}{n} \,\to\, c \,>\, 0.
\]
Applying relevant results from random matrix theory we prove three propositions.
Proposition \ref{prop:MP} shows that when $A=0$ the sample distribution of the singular values of $Y$ has non-random limit supported on $[|1-\sqrt{c}|,1+\sqrt{c}]$. Moreover the smallest and the largest singular values of $Y$ converge to the corresponding edges of the support. Proposition \ref{prop:SPM} shows that for all singular values of $A$ larger than $\sqrt[4]{c}$ the corresponding singular values of $Y$ converge in probability to non-random limits larger that $1+\sqrt{c}$. 
For the remaining singular values the statement of Proposition \ref{prop:MP} still holds.
Finally, Proposition \ref{prop:Paul} shows (under minor condition) that for singular values of $A$ larger than $\sqrt[4]{c}$ the squared inner product between corresponding left (or right) singular vectors of $A$ and $Y$ converges to non-random positive limits.
These propositions indicate the phase transition phenomenon: if a singular value of $A$ is smaller than $\sqrt[4]{c}$ then the corresponding singular value of $Y$ hides in $[|1-\sqrt{c}|,1+\sqrt{c}]$ along with singular values of $Y$ corresponding no zero singular values of $A$; if a singular value of $A$ is greater than $\sqrt[4]{c}$ then the corresponding singular values of $Y$ has a limit larger than $1+\sqrt{c}$ and the corresponding singular vectors of $Y$ preserve some information about those of $A$.

In Section \ref{sec:RMTmethod} we apply the result of the previous section to design a new matrix reconstruction method that recovers the singular values of $A$ from the singular values of $Y$ and additionally corrects for the difference between the singular vectors of $A$ and $Y$. The only unknown required for the designed method in the variance of the noise. 
In Section \ref{sec:noisevar} we propose an estimate for the noise that we incorporate in the propose reconstruction method.

In Section \ref{sec:simulations} we conduct an extensive simulation study to compare 
the proposed reconstruction method against the oracle versions of hard and soft thresholding and the orthogonally invariant oracle. 
We run all four methods on matrices of different sizes, with signals of various ranks and spectra.
The simulations clearly show that the new method strongly outperforms the oracle versions of both hard and soft thresholding methods and closely matches the performance of the orthogonally invariant oracle.

\vskip 1in

However, it is natural to ask whether SVD-based approach is optimal.
In general, a reconstruction scheme is a map $g: \reals^{m \times n} \to \reals^{m \times n}$.
It does not have to be based on SVD of matrix $Y$ and does not have to be formulated as a penalized minimization problem.
It even does not have to produce matrices of low rank.
Can we achieve better reconstruction if we do not restrict ourselves to 
scheme that just shrink or threshold singular values?
Can we achieve a better reconstruction if we do not restrict ourselves to the method based on SVD and which produce low-rank matrices? 

In the first part of this \paper we analyze the matrix reconstruction problem and determine several necessary properties of efficient reconstruction schemes. In Section \ref{sec:matrec} we prove that under mild conditions on the prior information about the signal (lack of information about its singular vectors) any effective denoising scheme must be based on the singular value decomposition (SVD) of the observed matrix. Moreover, it need only modify the singular values, not singular vectors. These facts alone reduce the space of efficient reconstruction schemes from $g: \reals^{m \times n} \to \reals^{m \times n}$ to just $g: \reals^{m \wedge n} \to \reals^{m \wedge n}$, where $m \wedge n$ denotes the minimum of $m$ and $n$.

In the second part of the \paper we propose a new reconstruction scheme.
Rather than adopting approaches of hard and soft thresholding,
we start by determining the effect of additive noise on the singular values and singular vectors of low-rank matrices.
We do it by first making a connection between the matrix reconstruction problem and spiked population models in random matrix theory. 
In Section \ref{sec:asy} we translate relevant theorems from random matrix theory to the settings of the matrix reconstruction problem.
In Section \ref{sec:RMTmethod} we proposed reconstruction scheme is then built based using these results .
The proposed scheme is designed to reverse the effect of the noise on the singular values of the signal and corrects for the effect of the noise on the signal's singular vectors. We call the proposed method RMT for on its use of random matrix theory.

In Section \ref{sec:simulations} we compare the proposed method with oracle version of the soft and hard thresholding methods. The simulations show that RMT scheme strongly outperforms oracle versions of the existing methods, and closely matches the performance of a general oracle scheme for generated matrices of various size and signal spectra.

\textbf{Cut.}

}

\comment{

\noindent
\textbf{Remark:}
Our assumption that the entries of the noise matrix $W$ are Gaussian arises
from two conditions required in the analysis that follows.
The results of Section \ref{sec:invariant} require that $W$ has an orthogonally 
invariant distribution (see Definition \ref{def:ortinvdist}). 
On the other hand, the results of Section \ref{sec:asy} are based on theorems from random matrix theory,
which require the elements of $W$ to be i.i.d.\ with zero mean, unit variance, 
and finite forth moment.
It is known that the only distribution satisfying both these assumptions 
is the Gaussian \citep{bartlett1934vector}.
Nevertheless, our simulations (not presented) show that the Gaussian noise assumption is not required 
to ensure good performance of the RMT reconstruction method.
}



\section{Orthogonally Invariant Reconstruction\\ Methods%
\labelA{sec:invariant}}
The additive model (\ref{eq:model}) and Frobenius loss (\ref{eq:loss}) 
have several elementary invariance properties, that
lead naturally to the consideration of reconstruction methods with analogous forms of invariance.
Recall that a square matrix $U$ is said to be orthogonal if $UU' = U'U = I$,
or equivalently, if the rows (or columns) of $U$ are orthonormal.  
If we multiply each side of (\ref{eq:model}) from the left  
right by orthogonal matrices $U$ and $V'$ of appropriate dimensions,
we obtain
\begin{equation}
\labelA{eq:rotated}
UYV' \, = \, UAV' \,+\, \frac{1}{\sqrt{n}} UWV' .
\end{equation}

\begin{prop} 
\labelA{thm:orthogonal}
Equation (\ref{eq:rotated}) is a reconstruction problem of the form (\ref{eq:model})
with signal $UAV'$ and observed matrix $UYV'$.  
If $\widehat A$ is an estimate of $A$ in model (\ref{eq:model}), 
then $U \widehat A V'$ is an estimate of $UAV'$ in model (\ref{eq:rotated}) 
with the same loss.
\end{prop}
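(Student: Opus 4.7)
The proposition splits naturally into two independent claims, and I would treat them separately.

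For the first claim, the plan is to verify that the right-hand side of (\ref{eq:rotated}) still fits the template of (\ref{eq:model}). The signal term $UAV'$ is a deterministic $m \times n$ matrix, so the only nontrivial point is to check that $UWV'$ has the same distribution as $W$, namely i.i.d.\ $N(0,1)$ entries. I would invoke the standard orthogonal invariance of the standard Gaussian: writing $W = [w_1 \mid \cdots \mid w_n]$ column-wise with independent $w_j \sim N(0, I_m)$, left multiplication by $U$ yields independent $U w_j \sim N(0, U U') = N(0, I_m)$, so $UW$ has i.i.d.\ $N(0,1)$ entries; applying the same argument row-wise to right-multiplication by $V'$ gives that $UWV'$ has i.i.d.\ $N(0,1)$ entries. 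Hence (\ref{eq:rotated}) is indeed of the form (\ref{eq:model}) with signal $UAV'$, observation $UYV'$, and a noise matrix of the required distribution.

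For the second claim, the plan is to compute the loss directly. Using (\ref{eq:loss}),
\begin{equation*}
\mathrm{Loss}(UAV', U\widehat A V') \,=\, \| U\widehat A V' - UAV' \|_F^2 \,=\, \| U(\widehat A - A) V' \|_F^2.
\end{equation*}
I would then appeal to the trace representation $\|B\|_F^2 = \mathrm{tr}(B'B)$ noted just after (\ref{eq:loss}). Setting $B = U(\widehat A - A)V'$ and using $U'U = I$, $V'V = I$ together with the cyclic property of the trace gives
\begin{equation*}
\| U(\widehat A - A)V' \|_F^2 \,=\, \mathrm{tr}\bigl( V(\widehat A - A)' U' U (\widehat A - A) V' \bigr) \,=\, \mathrm{tr}\bigl( (\widehat A - A)'(\widehat A - A) \bigr) \,=\, \| \widehat A - A \|_F^2,
\end{equation*}
which is exactly the loss of $\widehat A$ as an estimator of $A$ in (\ref{eq:model}).

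Neither step is genuinely difficult; the whole content is the orthogonal invariance of the standard Gaussian distribution and of the Frobenius norm. If anything, the only place requiring mild care is being explicit that $U$ and $V$ in (\ref{eq:rotated}) must be of compatible dimensions ($U$ is $m \times m$ and $V$ is $n \times n$) so that the multiplications are well-defined and the resulting noise matrix is again $m \times n$; this is implicit in the statement but worth noting to keep the bookkeeping clean.
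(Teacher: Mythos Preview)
Your proposal is correct and follows essentially the same approach as the paper: orthogonal invariance of the standard Gaussian for the first claim, and invariance of the Frobenius norm via the trace identity for the second. The only cosmetic difference is that the paper also remarks that $UAV'$ has the same rank as $A$, and it states the Frobenius-norm invariance $\|UBV'\|_F^2 = \|B\|_F^2$ as a general fact before specializing to $B = \widehat A - A$, whereas you compute it directly.
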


\begin{proof}
If $A$ has rank $r$ then $UAV'$ also has rank $r$.
Thus prove the first statement, it suffices to show that 
$UWV'$ in (\ref{eq:rotated}) has independent $N(0,1)$ entries. This follows from standard properties of the multivariate normal distribution.
In order to establish the second statement of the proposition, let $U$ and 
$V$ be the orthogonal matrices in (\ref{eq:rotated}).
For any $m \times n$ matrix $B$,
\begin{equation*}
\| U B \|_F^2 
\, = \,
\mbox{tr} \big[ (U B)' (U B) \big]
\, = \,
\mbox{tr} \big[ B' B \big]
\, = \,
\| B \|_F^2 ,
\end{equation*}
and more generally $\|U B V' \|_F^2 = \| B \|_F^2$. 
Applying the last equality to $B = \widehat A - A$ yields
\begin{equation*}
\mbox{Loss} (UAV', U \widehat AV') 
\, = \,
\| U (\widehat A - A)V' \|_F^2 
\, = \,
\| \widehat A - A \|_F^2 
\, = \,
\mbox{Loss} (A, \widehat A)
\end{equation*}
as desired.
\end{proof}

In the proof we use the fact that the distribution of matrix $W$ does not change
under left and right multiplications by orthogonal matrices. We will call such distributions orthogonally invariant.

\begin{definition} 
\labelA{def:ortinvdist}
A random $m \times n$ matrix $Z$ has 
an orthogonally invariant distribution
if for any orthogonal matrices $U$ and $V$ of appropriate size the distribution of $UZV'$ 
is the same as the distribution of $Z$.
\end{definition}


In light of Proposition \ref{thm:orthogonal} 
it is natural to consider reconstruction schemes those action
do not change under orthogonal transformations of the reconstruction problem.

\begin{definition}
\labelA{def:ortinvscheme}
A reconstruction scheme $g(\cdot)$ is orthogonally invariant 
if for any $m \times n$ matrix $Y$, and any orthogonal matrices 
$U$ and $V$ of appropriate size,
$g(UYV') 
=
U g(Y) V'.$
\end{definition}

In general, a good reconstruction method need not be orthogonally
invariant.  For example, if the signal matrix $A$ is known
to be diagonal, then for each $Y$ the estimate $g(Y)$ should be
diagonal as well, and in this case $g(\cdot)$ is not orthogonally
invariant.
However, as we show in the next theorem, if we have no information 
about the singular vectors of $A$ (either prior information or information from the singular values of $A$), then it suffices to restrict our attention to orthogonally invariant reconstruction schemes.  

\vskip.2in

\begin{thm} 
\labelA{thm:ortoptimality} 
Let $Y = \mathbf{A} + W$, where $\mathbf{A}$ is a random target matrix. 
Assume that $\mathbf{A}$ and $W$ are independent and have 
orthogonally invariant distributions.
Then, for every reconstruction scheme $g(\cdot)$,
there is an orthogonally invariant reconstruction 
scheme $\tilde g(\cdot)$ whose expected loss is the same, or smaller, 
than that of $g(\cdot)$.
\end{thm}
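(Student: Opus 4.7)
The plan is the classical symmetrization (Rao--Blackwellization) argument over the orthogonal groups, using the convexity of squared Frobenius loss together with the orthogonal invariance of the distributions of $\mathbf{A}$ and $W$.

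First, I would define the symmetrized scheme by averaging $g$ against Haar measure on $O(m) \times O(n)$:
\begin{equation*}
\tilde g(Y) \;=\; \int_{O(m)}\!\!\int_{O(n)} U'\, g(UYV')\, V \; d\mu(U)\, d\nu(V),
\end{equation*}
where $\mu,\nu$ are the (normalized) Haar measures. To check that $\tilde g$ is orthogonally invariant, I would fix orthogonal $P,Q$ and substitute $\tilde U = UP$, $\tilde V = VQ$ in the definition of $\tilde g(PYQ')$; by left/right invariance of Haar measure the substitution pulls $P$ and $Q'$ outside the integral, giving $\tilde g(PYQ') = P\,\tilde g(Y)\,Q'$.

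Next, the loss comparison. Since $\int\!\!\int dU\,dV = 1$, one can write $\tilde g(Y) - \mathbf{A}$ as the Haar average of $U'g(UYV')V - \mathbf{A}$. Because $B \mapsto \|B\|_F^2$ is convex, Jensen's inequality yields
\begin{equation*}
\|\tilde g(Y) - \mathbf{A}\|_F^2 \;\le\; \int\!\!\int \bigl\|U'g(UYV')V - \mathbf{A}\bigr\|_F^2 \, d\mu(U)\,d\nu(V).
\end{equation*}
Inside the integrand I would apply the Frobenius isometry already used in Proposition~\ref{thm:orthogonal}: $\|U'BV - \mathbf{A}\|_F^2 = \|B - U\mathbf{A}V'\|_F^2$, so the right-hand side becomes the Haar average of $\|g(UYV') - U\mathbf{A}V'\|_F^2$. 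Taking expectations over $(\mathbf{A},W)$ and using Fubini (justified by nonnegativity), the key distributional step is that for every fixed orthogonal pair $(U,V)$, the joint law of $(U\mathbf{A}V', UWV')$ equals that of $(\mathbf{A}, W)$: this follows from the independence of $\mathbf{A}$ and $W$ together with their individual orthogonal invariance. Hence $(UYV', U\mathbf{A}V') \stackrel{d}{=} (Y,\mathbf{A})$, and therefore $\mathbb{E}\|g(UYV') - U\mathbf{A}V'\|_F^2 = \mathbb{E}\|g(Y) - \mathbf{A}\|_F^2$ for each $(U,V)$. Integrating over Haar measure gives $\mathbb{E}\,\text{Loss}(\mathbf{A},\tilde g(Y)) \le \mathbb{E}\,\text{Loss}(\mathbf{A}, g(Y))$.

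The main technical obstacle, in my view, is not the computation but the measure-theoretic bookkeeping: one must ensure that $g$ is measurable enough that $(U,V,Y) \mapsto U'g(UYV')V$ is jointly measurable and integrable with respect to the product of Haar measure and the law of $Y$, so that Fubini and Jensen are legitimate. In the reconstruction setting one may simply restrict attention to Borel-measurable $g$ with finite expected loss (otherwise the inequality is trivial), and then the Haar integral defining $\tilde g$ is well defined and everything above goes through. The orthogonal-invariance verification for $\tilde g$ and the distributional identity $(U\mathbf{A}V', UWV') \stackrel{d}{=} (\mathbf{A}, W)$ are both straightforward from the hypotheses, so the whole argument is essentially Jensen plus a change of variables.
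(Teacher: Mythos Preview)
Your proposal is correct and is essentially the same argument as the paper's: the paper also defines $\tilde g(Y)$ as the Haar average $\mathbb{E}[\mathbf{U}'g(\mathbf{U}Y\mathbf{V}')\mathbf{V}\mid Y]$ (written as a conditional expectation rather than an integral), applies Jensen's inequality for the convex squared Frobenius norm, and then uses the orthogonal invariance of $\mathcal{L}(W)$ and $\mathcal{L}(\mathbf{A})$ to show the expected loss is unchanged. The only cosmetic difference is that the paper handles the invariance of $W$ and of $\mathbf{A}$ in two separate conditioning steps, whereas you invoke the joint identity $(U\mathbf{A}V',UWV')\stackrel{d}{=}(\mathbf{A},W)$ in one stroke.
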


\begin{proof}
Let $\mathbf{U}$ be an $m \times m$ random matrix that is independent of 
$\mathbf{A}$ and $W$, and is distributed according to 
Haar measure on the compact group of $m \times m$ orthogonal
matrices.  Haar measure is (uniquely) defined by the
requirement that, for every $m \times m$ orthogonal matrix $C$, 
both $C \mathbf{U}$ and $\mathbf{U} C$
have the same distribution as $\mathbf{U}$ 
\citep[\textit{c.f.}][]{hofmann2006structure}. 
Let $\mathbf{V}$ 
be an $n \times n$ random matrix distributed according to the 
Haar measure on the compact group of $n \times n$ orthogonal
matrices that is independent of 
$\mathbf{A}$, $W$ and $\mathbf{U}$.
Given a reconstruction scheme $g(\cdot)$, define a new  scheme 
\[
\tilde g(Y) 
\, = \, 
\mathbb{E} [\mathbf{U}' g(\mathbf{U}Y\mathbf{V}') \mathbf{V} \, | \, Y] .
\]
It follows from the definition of $\mathbf{U}$ and $\mathbf{V}$ that
$\tilde g(\cdot)$ is orthogonally invariant.  The independence
of $\{ \mathbf{U}, \mathbf{V} \}$ and $\{ \mathbf{A}, W \}$ ensures 
that conditioning on $Y$ is equivalent to conditioning on $\{\mathbf{A},W\}$,
which yields the equivalent representation
\[
\tilde g(Y) 
\, = \,
\mathbb{E} [\mathbf{U}' g(\mathbf{U}Y\mathbf{V}') \mathbf{V} 
\, | \, \mathbf{A},W]
.
\] 
Therefore,
\begin{equation*}
\begin{split}
\mathbb{E} \, \mbox{Loss}(\mathbf{A},\tilde g(Y)) 
\, & = \,
\mathbb{E} \, \big\| 
\mathbb{E} [ \mathbf{U}' g(\mathbf{U} Y \mathbf{V}') \mathbf{V} - \mathbf{A} 
\, | \, \mathbf{A}, W ]  \big\|_F^2 \\ 
& \leq \,
\mathbb{E} \|  \mathbf{U}' g(\mathbf{U} Y \mathbf{V}')\mathbf{V} - \mathbf{A}  \|_F^2 \\
& = \,
\mathbb{E} \|  g(\mathbf{U} Y \mathbf{V}') - \mathbf{U}\mathbf{A}\mathbf{V}'  \|_F^2
,
\end{split}
\end{equation*}
the inequality follows from the conditional version of 
Jensen's inequality applied to each term in the sum defining the squared norm.
The final equality follows from the orthogonality of 
$\mathbf{U}$ and $\mathbf{V}$.  The last term in the previous display can be
analyzed as follows:
\begin{equation*}
\begin{split}
\mathbb{E} 
\big\|  g(\mathbf{U} Y \mathbf{V}') -  \mathbf{U}&\mathbf{A}\mathbf{V}'  \big\|_F^2 
\\
\, & = \,
\mathbb{E} \Big[ \mathbb{E} 
\big(\|  g(\mathbf{U} \mathbf{A} \mathbf{V}'+n^{-1/2}\mathbf{U} W \mathbf{V}') - 
\mathbf{U}\mathbf{A}\mathbf{V}'  \|_F^2 
\, | \, \mathbf{U}, \mathbf{V},\mathbf{A} \big) \Big] \\ 
& = \,
\mathbb{E} \Big[ \mathbb{E} 
\big(\| g(\mathbf{U} \mathbf{A} \mathbf{V}'+n^{-1/2}W) - 
\mathbf{U} \mathbf{A} \mathbf{V}'  \|_F^2 \, | \,  
\mathbf{U}, \mathbf{V},\mathbf{A} \big) \Big] \\
& = \,
\mathbb{E} \big\|  g(\mathbf{U} \mathbf{A} \mathbf{V}'+n^{-1/2}W) - 
\mathbf{U}\mathbf{A}\mathbf{V}'  \big\|_F^2
.
\end{split}
\end{equation*}
The first equality follows from the definition of $Y$;
the second follows from the independence of
$W$ and $\mathbf{U}, \mathbf{A}, \mathbf{V}$, and the orthogonal invariance of $\mathcal{L}(W)$.
By a similar argument using the orthogonal invariance of
$\mathcal{L}(\mathbf{A})$, we have
\begin{equation*}
\begin{split}
\mathbb{E} \|  g(\mathbf{U} \mathbf{A} \mathbf{V}' +  n^{-1/2}&W) - 
\mathbf{U}\mathbf{A}\mathbf{V}'  \|_F^2 
\\
\, & = \,
\mathbb{E} \Big[ \mathbb{E} \big(\|  g(\mathbf{U} \mathbf{A} \mathbf{V}'+n^{-1/2}W) - \mathbf{U}\mathbf{A}\mathbf{V}'  \|_F^2 \, | \, \mathbf{U}, \mathbf{V},W \big) \Big] \\
& = \,
\mathbb{E} \Big[ \mathbb{E} \big(\|  g(\mathbf{A}+n^{-1/2}W) - 
\mathbf{A}  \|_F^2 \, | \, \mathbf{U}, \mathbf{V},W \big) \Big] \\
& = \,
\mathbb{E} \|  g(\mathbf{A}+n^{-1/2}W) - \mathbf{A} \|_F^2 .
\end{split}
\end{equation*}
The final term above is $\mathbb{E} \, \mbox{Loss}(\mathbf{A},g(Y))$. 
This completes the proof.
\end{proof}

Based on Theorem \ref{thm:ortoptimality}
will restrict our attention to orthogonally 
invariant reconstruction schemes in what follows.  

As noted in introduction, 
the singular value decomposition (SVD) of the observed matrix $Y$ is 
a natural starting point for reconstruction of a signal matrix $A$.
As we show below, the SVD of $Y$ is intimately connected with orthogonally invariant reconstruction methods.
An immediate consequence of the decomposition $Y = UDV'$ is that $U' Y V = D$, so 
we can diagonalize $Y$ by means of left and right orthogonal multiplications.

The next proposition follows from our ability to diagonalize the signal matrix $A$ in the reconstruction problem.

\begin{prop}
\labelA{prop:onlySVmatter}
Let $Y = A + n^{-1/2}W$, where $W$ has an orthogonally invariant distribution.
If $g(\cdot)$ is an orthogonally invariant reconstruction scheme, then
for any fixed signal matrix $A$, the distribution of 
$\mbox{Loss}(A,g(Y))$, and in particular $\mathbb{E} \mbox{Loss}(A,g(Y))$,
depends only on the singular values of $A$.
\end{prop}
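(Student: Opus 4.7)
The plan is to show that if two signal matrices $A_1$ and $A_2$ have the same singular values, then $\mathrm{Loss}(A_1, g(Y_1))$ and $\mathrm{Loss}(A_2, g(Y_2))$ have the same distribution, where $Y_i = A_i + n^{-1/2}W$. Since a matrix is determined up to left and right orthogonal multiplications by its singular values, there exist orthogonal matrices $P$ (of size $m\times m$) and $Q$ (of size $n\times n$) such that $A_2 = P A_1 Q'$. The idea is to use the orthogonal invariance of $\mathcal{L}(W)$ to move these factors inside the noise term, and then the orthogonal invariance of $g$ to pull them outside of $g$.

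Concretely, I would first write
\[
Y_2 \, = \, P A_1 Q' + n^{-1/2} W \, \stackrel{d}{=} \, P A_1 Q' + n^{-1/2} P W Q' \, = \, P\bigl(A_1 + n^{-1/2} W\bigr) Q' \, = \, P Y_1 Q',
\]
where the distributional equality uses Definition \ref{def:ortinvdist} applied to $W$. Next, since $g(\cdot)$ is orthogonally invariant (Definition \ref{def:ortinvscheme}),
\[
g(Y_2) \, \stackrel{d}{=} \, g(P Y_1 Q') \, = \, P\, g(Y_1)\, Q'.
\]
Combining these, the random matrix $g(Y_2) - A_2$ has the same distribution as $P\, g(Y_1)\, Q' - P A_1 Q' = P\bigl(g(Y_1) - A_1\bigr)Q'$.

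Finally, I would invoke the orthogonal invariance of the Frobenius norm (already used in the proof of Proposition \ref{thm:orthogonal}), which gives $\|P B Q'\|_F = \|B\|_F$ for every $m\times n$ matrix $B$. Applying this with $B = g(Y_1) - A_1$ yields
\[
\mathrm{Loss}(A_2, g(Y_2)) \, = \, \|g(Y_2) - A_2\|_F^2 \, \stackrel{d}{=} \, \|g(Y_1) - A_1\|_F^2 \, = \, \mathrm{Loss}(A_1, g(Y_1)).
\]
Taking expectations gives the claim about $\mathbb{E}\,\mathrm{Loss}$. I do not anticipate a serious obstacle: the only subtlety is to be careful that the identity $Y_2 \stackrel{d}{=} P Y_1 Q'$ is a statement about distributions of random matrices (with $A_1, A_2$ fixed), and to recognize that composing distributional equality with the measurable map $g$ and the continuous map $\|\cdot\|_F^2$ preserves equality in distribution, so the conclusion about $\mathrm{Loss}$ is immediate.
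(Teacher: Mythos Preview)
Your proposal is correct and uses essentially the same ingredients as the paper's proof: orthogonal invariance of $W$, of $g(\cdot)$, and of the Frobenius norm. The only cosmetic difference is that the paper compares an arbitrary $A$ to its diagonal form $D_A$ via the SVD $A=UD_AV'$ (showing $\mathrm{Loss}(A,g(Y))=\|g(D_A+n^{-1/2}U'WV)-D_A\|_F^2$ and then invoking $U'WV\stackrel{d}{=}W$), whereas you compare two arbitrary matrices $A_1,A_2$ with the same singular values directly; these are logically equivalent formulations of the same argument.
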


\begin{proof}
Let $U D_A V'$ be the SVD of $A$.  Then $D_A = U' A\, V$, and as 
the Frobenius norm is invariant under left and right orthogonal multiplications,
\begin{equation*}
\begin{split}
\mbox{Loss}(A,g(Y))
\, & = \,
\| \, g(Y) - A \, \|_F^2
\ = \
\| \, U' \, ( g(Y) - A ) \, V \, \|_F^2
\\ & = \,
\| \, U' g(Y) V - U'AV \, \|_F^2
\ = \
\| \, g(U' Y V) - D_A \, \|_F^2
\\ & = \,
\| \, g(D_A + n^{-1/2}U' W V) - D_A \, \|_F^2
.
\end{split}
\end{equation*}
The result now follows from the fact that $U W V'$ has the same distribution
as $W$.
\end{proof}

We now address the implications of our ability to diagonalize the observed matrix $Y$.
Let $g(\cdot)$ be an orthogonally invariant reconstruction method, and let $U D V'$
be the singular value decomposition of $Y$.  
It follows from the orthogonal invariance of $g(\cdot)$ that
\begin{equation}
\labelA{eq:preDIDO}
g(Y) 
\, = \,
g( U D V' )
\, = \,
U g(D) V' 
\, = \,
\sum_{i = 1}^m \sum_{j =1}^n c_{ij} u_i v_j'
\end{equation}
where $c_{ij}$ depend only on the singular values of $Y$.
In particular, any orthogonally invariant $g(\cdot)$ 
reconstruction method is completely 
determined by how it acts on diagonal matrices. 
The following theorem allows us to substantially refine the 
representation (\ref{eq:preDIDO}).

\begin{thm} 
\labelA{thm:DIDO}
Let $g(\cdot)$ be an orthogonally invariant reconstruction scheme. 
Then $g(Y)$ is diagonal whenever $Y$ is diagonal.
\end{thm}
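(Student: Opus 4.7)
The plan is to exploit the invariance $g(UYV') = Ug(Y)V'$ by choosing $U$ and $V$ to be the right ``symmetries'' of the diagonal matrix $Y$. The natural candidates are diagonal orthogonal matrices, i.e.\ sign-flip matrices whose diagonal entries lie in $\{-1,+1\}$. The key observation is that such matrices fix a diagonal $Y$ provided their signs satisfy a simple pairing condition, yet they still act nontrivially on the off-diagonal entries of $g(Y)$, which will force those entries to vanish.

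Concretely, suppose $Y$ is $m\times n$ diagonal with entries $d_1,\ldots,d_{m\wedge n}$. Let $U=\mathrm{diag}(\epsilon_1,\ldots,\epsilon_m)$ and $V=\mathrm{diag}(\eta_1,\ldots,\eta_n)$ with $\epsilon_i,\eta_j\in\{-1,+1\}$. A direct computation gives $(UYV')_{ij}=\epsilon_i\eta_j Y_{ij}$, which equals $Y_{ij}$ for all $i,j$ precisely when $\epsilon_i\eta_i=1$ for every $i\le m\wedge n$ (the remaining signs $\eta_{m+1},\ldots,\eta_n$ or $\epsilon_{m+1},\ldots,\epsilon_m$ being unconstrained). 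For any such pair $(U,V)$, orthogonal invariance yields
\[
g(Y)\,=\,g(UYV')\,=\,U\,g(Y)\,V'.
\]
Writing $B=g(Y)$, this identity reads $B_{ij}=\epsilon_i\eta_j B_{ij}$ for all $i,j$, so whenever the sign product $\epsilon_i\eta_j$ can be made equal to $-1$ subject to the pairing constraint $\epsilon_k\eta_k=1$ for $k\le m\wedge n$, we immediately conclude $B_{ij}=0$.

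It then remains to check case by case that for every off-diagonal index pair $i\ne j$ such a sign assignment is available. If both $i,j\le m\wedge n$, take $\epsilon_i=1,\epsilon_j=-1$ and set $\eta_i=\epsilon_i$, $\eta_j=\epsilon_j$; this is consistent with the pairing constraint and gives $\epsilon_i\eta_j=-1$. If exactly one index, say $i$, lies in $\{1,\ldots,m\wedge n\}$ and $j>m\wedge n$, then $\eta_j$ is free and we pick $\eta_j=-\epsilon_i$. The analogous choice handles $i>m\wedge n$, $j\le m\wedge n$, and when both indices exceed $m\wedge n$ (possible when $m\ne n$) both signs are free. In every case $B_{ij}=0$ for $i\ne j$, so $g(Y)$ is diagonal.

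There is no genuine obstacle here: the only mild subtlety is making sure that the sign-flip matrices we use actually fix $Y$ (which is why we must enforce $\epsilon_i\eta_i=1$ on the diagonal block of size $m\wedge n$) while still providing enough freedom to negate any prescribed off-diagonal position of $g(Y)$. Once that bookkeeping is done, the conclusion is immediate from orthogonal invariance and requires no further structural information about $g$.
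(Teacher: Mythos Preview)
Your proof is correct and takes essentially the same approach as the paper's: diagonal sign-flip matrices that fix $Y$ but negate a prescribed off-diagonal entry of $g(Y)$; the paper merely packages this row-by-row via the single Householder pair $D_L=I-2e_ke_k'$, $D_R$ its top-left $n\times n$ block, rather than entry-by-entry as you do. One harmless slip: your case ``both $i,j>m\wedge n$'' is vacuous, since $i\le m$ and $j\le n$ force at least one of them to be $\le m\wedge n=\min(m,n)$, so you can simply drop it.
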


\begin{proof}
Assume without loss of generality that $m \geq n$.
Let the observed matrix
$Y = \mbox{diag}(d_1,d_2,...,d_n)$,
and let $\widehat A = g(Y)$ be the reconstructed matrix. 
Fix a row index $1 \leq k \leq m$. 
We will show that $\widehat A_{kj} = 0$ for all $j \neq k$.
Let $D_L$ be an $m \times m$ matrix derived from the 
identity matrix by flipping the sign of the $k^{th}$ diagonal element.  More formally,
$
D_L
\, = \,
I - 2e_k e_k'
$,
where $e_k$ is the $k^{th}$ standard basis vector in $\reals^m$. 
The matrix $D_L$ is known as a Householder reflection. 

Let $D_R$ be the top left $n \times n$ submatrix of $D_L$. 
Clearly $D_L D_L' = I$ and $D_R D_R' = I$, so both $D_L$ and $D_R$ are orthogonal. 
Moreover, all three matrices $D_L, Y,$ and $D_R$ are diagonal, and therefore we
have the identity
$
Y
=
D_L Y D_R
.
$
It then follows from the orthogonal invariance of $g(\cdot)$ that
\[
\widehat A
\, = \,
g(Y)
\, = \,
g(D_L Y D_R)
\, = \,
D_L \,g(Y)\, D_R
\, = \,
D_L \,\widehat A \, D_R
.
\]
The $(i,j)^{th}$ element of the matrix $D_L \widehat A\, D_R$ is
$
\widehat A_{ij} (-1)^{\delta_{ik}} (-1)^{\delta_{jk}}
$,
and therefore $\widehat A_{kj} = - \widehat A_{kj}$ if $j \neq k$. 
As $k$ was arbitrary, $\widehat A$ is diagonal.
\end{proof}

As an immediate corollary of Theorem \ref{thm:DIDO} 
and equation (\ref{eq:preDIDO})
we obtain a compact, and
useful, representation of any orthogonally invariant reconstruction 
scheme $g(\cdot)$.

\begin{cor}%
\label{cor:DIDO}%
Let $g(\cdot)$ be an orthogonally invariant reconstruction method.
If the observed matrix $Y$ has singular value decomposition $Y=\sum d_j u_j v_j'$ then the
reconstructed matrix has the form
\begin{equation}
\labelA{eq:DIDO}
\widehat A 
\ = \ 
g(Y)
\, = \,
\sum_{j =1}^{m \wedge n} c_j u_j v_j' ,
\end{equation}
where the coefficients $c_j$ {\it depend only on the singular values of $Y$}.
\end{cor}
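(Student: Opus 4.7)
The plan is to simply combine equation (\ref{eq:preDIDO}) with Theorem \ref{thm:DIDO}. First I would write the singular value decomposition $Y = UDV'$ with $D = \mbox{diag}(d_1,\ldots,d_{m\wedge n})$, and invoke orthogonal invariance of $g(\cdot)$ to obtain $g(Y) = U g(D) V'$, exactly as in (\ref{eq:preDIDO}).

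Next I would apply Theorem \ref{thm:DIDO} to the diagonal matrix $D$: this gives that $g(D)$ is itself diagonal, so there exist scalars $c_1,\ldots,c_{m\wedge n}$ with $g(D) = \mbox{diag}(c_1,\ldots,c_{m\wedge n})$. Substituting this into $g(Y) = U g(D) V'$ and expanding the product column by column yields
\[
g(Y) \, = \, U\,\mbox{diag}(c_1,\ldots,c_{m\wedge n})\,V' \, = \, \sum_{j=1}^{m\wedge n} c_j\, u_j v_j',
\]
which is the desired representation (\ref{eq:DIDO}).

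The only point that requires a brief remark is the claim that the $c_j$ depend only on the singular values $d_1,\ldots,d_{m\wedge n}$ of $Y$, and not on $U$ or $V$. This is immediate from the construction: the scalars $c_j$ are defined as the diagonal entries of $g(D)$, and $D$ is itself determined by the singular values of $Y$ alone. So $c_j = c_j(d_1,\ldots,d_{m\wedge n})$ is a function of the singular values only, and the corollary follows with no further work. There is no real obstacle here; the content has already been done in Theorem \ref{thm:DIDO}, and the corollary is just a matter of assembling the pieces.
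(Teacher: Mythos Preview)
Your proposal is correct and matches the paper's approach exactly: the paper states the corollary as an immediate consequence of Theorem~\ref{thm:DIDO} and equation~(\ref{eq:preDIDO}) without giving a separate proof, and your argument spells out precisely that combination. There is nothing to add.
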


The converse of Corollary \ref{cor:DIDO} is true under a mild additional condition. 
Let $g(\cdot)$ be a reconstruction scheme such that $g(Y) = \sum c_j u_j v_j'$, where 
$c_j = c_j(d_1,\ldots,d_{m\wedge n})$ are fixed functions of the singular values of 
$Y$.  If the functions $\{ c_j(\cdot) \}$ are such that $c_i(d) = c_j(d)$ whenever $d_i = d_j$, then 
$g(\cdot)$ is orthogonally invariant.  This follows from the uniqueness of the singular value decomposition.

%

%


\section{Asymptotic Matrix Reconstruction and \\Random Matrix Theory%
\labelA{sec:asy}}


Random matrix theory is broadly concerned with the spectral 
properties of random matrices, and is an obvious starting point
for an analysis of matrix reconstruction.
The matrix reconstruction problem has several points of intersection with 
random matrix theory.  
Recently a number of authors have studied low rank deformations of Wigner matrices
\citep{capitaine2009largest, feral2007largest, maida2007large, peche2006largest}.
However, their results concern symmetric matrices, a constraint not present in the reconstruction
model, and are not directly applicable to the reconstruction
problem of interest here.  (Indeed, our simulations of non-symmetric matrices exhibit
behavior deviating from that predicted by the results of these papers.)
A signal plus noise framework similar to matrix reconstruction is studied in \cite{dozier2007empirical} and \cite{nadakuditi2007fundamental},
however both these papers model the signal matrix to be random, while in the matrix reconstruction problem we assume it to be non-random. 
\cite{karoui2008spectrum} considered the problem of estimation the eigenvalues of a population covariance matrix
from a sample covariance matrix, which is similar to the problem of estimation of the singular values of $A$ from the singular values of $Y$. However for the matrix reconstruction problem it is equally important to 
estimate the difference between the singular vectors of $A$ and $Y$, in addition to the estimate of the singular values of $A$.

Our proposed denoising scheme is based on the theory of spiked
population models in random matrix theory.  Using recent results on
spiked population models, we establish asymptotic connections between the
singular values and vectors of the signal matrix $A$ and those of 
the observed matrix $Y$.
These asymptotic connections provide us with
finite-sample estimates that can be applied in a non-asymptotic setting
to matrices of small or moderate
dimensions.


\subsection{Asymptotic Matrix Reconstruction Model}

The proposed reconstruction method is derived from an asymptotic version of the matrix 
reconstruction problem (\ref{eq:model}).  
For $n \geq 1$ let integers $m = m(n)$ be defined in such a way that 
\begin{equation}
\labelA{eq:aspect}
\frac{m}{n} \, \to \, c \, > \, 0 \ \mbox{ as } \ n \to \infty .
\end{equation}
For each $n$ let $Y$, $A$, and $W$ be $m \times n$ matrices such that 
\begin{equation}
\labelA{eq:asmodel}
	Y \,=\, A + \frac{1}{\sqrt{n}} W ,
\end{equation}
where the entries of $W$ are independent $N(0,1)$ random
variables.   
We assume that the signal matrix $A$ has 
fixed rank $r \geq 0$ and fixed non-zero singular
values $\lambda_1(A), \ldots, \lambda_r(A)$ that are 
independent of $n$.  
The constant $c$ represents the limiting aspect ratio of the observed
matrices $Y$.
The scale factor $n^{-1/2}$ ensures that the singular
values of the signal matrix are comparable to those of the noise.
We note that Model (\ref{eq:asmodel}) matches the asymptotic model used by 
\cite{capitaine2009largest,feral2007largest} in their study of fixed rank 
perturbations of Wigner matrices. 

In what follows $\lambda_j(B)$ will denote the $j$-th singular value
of a matrix $B$, and $u_j(B)$ and $v_j(B)$ will denote, respectively, 
the left and right singular values corresponding to $\lambda_j(B)$.
Our first proposition concerns the behavior of the singular values of
$Y$ when the signal matrix $A$ is equal to zero.

\begin{prop}
\labelA{prop:MP}
Under the asymptotic reconstruction model with $A = 0$ 
the empirical distribution of the singular values 
$\lambda_1(Y) \geq \cdots \geq \lambda_{m \wedge n}(Y)$ 
converges weakly to 
a (non-random) limiting distribution with density
\begin{equation}
\label{eq:MPdensity}
f_{Y}(s) 
\,=\,
\frac{s^{-1}}{\pi (c \wedge 1)}\sqrt{(a-s^2)(s^2-b)}
, 
\qquad s \in [\sqrt{a},\sqrt{b}]
,
\end{equation}
where $a = (1-\sqrt{c})^2$ and $b = (1+\sqrt{c})^2$.
Moreover, $\lambda_1(Y) \, \topr \, 1 + \sqrt{c}$ and 
$\lambda_{m \wedge n}(Y) \, \topr \, 1 - \sqrt{c}$ 
as $n$ tends to infinity.
\end{prop}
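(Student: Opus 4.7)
The plan is to reduce the statement about singular values of $Y$ to known facts about eigenvalues of the Wishart-type matrix $S = YY' = \frac{1}{n}WW'$, since the nonzero singular values of $Y$ are the square roots of the nonzero eigenvalues of $S$. Once the claim is rephrased in terms of $S$, both assertions are direct consequences of classical random matrix theory, specifically the results cited in the excerpt (\cite{marcenko1967distribution,wachter1978strong,geman1980limit}).

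First, I would invoke the Marčenko–Pastur theorem: under the aspect-ratio condition $m/n \to c$ with $W$ having i.i.d.\ $N(0,1)$ entries, the empirical spectral distribution of the eigenvalues $\mu_1 \geq \cdots \geq \mu_{m\wedge n}$ of $S = \frac{1}{n}WW'$ converges weakly (almost surely) to the Marčenko–Pastur law with density
\[
f_\mu(t) \,=\, \frac{1}{2\pi (c\wedge 1)\, t}\sqrt{(b-t)(t-a)}, \qquad t \in [a,b],
\]
where $a = (1-\sqrt{c})^2$ and $b = (1+\sqrt{c})^2$. (The factor $c\wedge 1$ rather than $c$ accounts for the two regimes $c\leq 1$ and $c>1$; in the latter case there is also a point mass at $0$, but this contributes nothing to the distribution of the strictly positive singular values in the limit, modulo a trivial adjustment of the normalizing constant.) Since $\lambda_j(Y) = \sqrt{\mu_j}$, the change of variables $s = \sqrt{t}$, $dt = 2s\, ds$, carries $f_\mu$ into
\[
f_Y(s) \,=\, 2 s\, f_\mu(s^2) \,=\, \frac{s^{-1}}{\pi (c\wedge 1)}\sqrt{(b-s^2)(s^2-a)}, \qquad s \in [\sqrt{a},\sqrt{b}],
\]
matching (\ref{eq:MPdensity}). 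Weak convergence is preserved under the continuous map $t\mapsto \sqrt{t}$, so the empirical distribution of the singular values converges weakly to the stated limit.

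For the edge behavior, I would cite the edge results for Wishart matrices with i.i.d.\ Gaussian entries: $\mu_1 \to b = (1+\sqrt{c})^2$ in probability (Geman 1980) and $\mu_{m\wedge n} \to a = (1-\sqrt{c})^2$ in probability (Bai–Yin type results, which hold under the finite fourth moment assumption satisfied by Gaussians). Applying the continuous mapping theorem to $t \mapsto \sqrt{t}$ on the relevant interval gives $\lambda_1(Y) \topr 1+\sqrt{c}$ and $\lambda_{m\wedge n}(Y) \topr |1-\sqrt{c}|$, as asserted.

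The only genuine subtlety is bookkeeping between the two regimes $c \leq 1$ and $c > 1$: when $c > 1$, the matrix $S$ is $m\times m$ with $m > n$, so its spectrum has $m - n$ forced zeros together with $n$ positive eigenvalues distributed according to the MP law on $[a,b]$; in this case $\lambda_{m\wedge n}(Y) = \lambda_n(Y)$ still equals $\sqrt{\mu_n} \to \sqrt{a} = \sqrt{c}-1$. One must also verify the harmless convention that the smallest singular value refers to the smallest among the $m\wedge n$ values produced by the SVD. This is the main bookkeeping obstacle; the substantive content is entirely imported from the classical results.
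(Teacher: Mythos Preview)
Your proposal is correct and matches the paper's own treatment: the paper does not give a detailed proof but simply attributes the density to the classical Mar\v{c}enko--Pastur theorem \citep{marcenko1967distribution,wachter1978strong} and the edge limits to \cite{geman1980limit} and \cite{wachter1978strong}, which is exactly your reduction via $S = n^{-1}WW'$ plus the change of variables $s=\sqrt{t}$. Your write-up is in fact more explicit than the paper's (you carry out the change of variables and flag the $c>1$ bookkeeping and the absolute value $|1-\sqrt c|$), and the only cosmetic difference is that you attribute the smallest-eigenvalue limit to Bai--Yin rather than Wachter.
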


The existence and form of the density $f_Y(\cdot)$ 
are a consequence of the 
classical Mar{\v{c}}enko-Pastur theorem 
\citep{marcenko1967distribution,wachter1978strong}. 
The in-probability limits of $\lambda_1(Y)$ and
$\lambda_{m \wedge n}(Y)$ follow from 
later work of \cite{geman1980limit}
and \cite{wachter1978strong}, respectively.
If $c=1$, the density function $f_{Y}(s)$ simplifies 
to the quarter-circle law
$f_{Y}(s) = \pi^{-1} \sqrt{4-s^2}$ for $s \in [0,2]$.

The next two results concern the limiting eigenvalues and
eigenvectors of $Y$ when $A$ is non-zero.  Proposition
\ref{prop:SPM} relates the limiting eigenvalues of $Y$ to the (fixed)
eigenvalues of $A$, while Proposition \ref{prop:Paul} relates the
limiting singular vectors of $Y$ to the singular vectors of $A$.  
Proposition \ref{prop:SPM} is based on recent work of \cite{baik2006els}, 
while Proposition \ref{prop:Paul} is based on
recent work of \cite{paul2007asymptotics}, \cite{nadler2008finite}, 
and \cite{lee2010convergence}.  The proofs of both results are given in 
Section \ref{sec:proofs}.

\begin{prop}
\labelA{prop:SPM}
Let $Y$ follow the asymptotic matrix reconstruction model (\ref{eq:asmodel})
with signal singular values 
$\lambda_1(A) \geq ... \geq \lambda_r(A) > 0$. 
For $1 \leq j \leq r$, as $n$ tends to infinity,
\[
\lambda_j(Y) 
\, \topr \,
\left\{
\begin{array}{lll}
\left( 
1 + \lambda_j^2(A) + c + \frac{c}{\lambda_j^2(A)}
\right)^{1/2}
							& \mbox{ if } &  \lambda_j(A) > \sqrt[4]{c}\\
1 + \sqrt{c} 					& \mbox{ if } & 0 < \lambda_j(A) \leq \sqrt[4]{c}  \\
\end{array}
\right.
\]
The remaining singular values $\lambda_{r+1}(Y),\ldots,\lambda_{m \wedge n}(Y)$ of $Y$
are associated with the zero singular values of $A$:  their empirical
distribution converges weakly to the limiting distribution in Proposition \ref{prop:MP}.
\end{prop}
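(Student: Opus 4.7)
The plan is to reduce the problem to the phase-transition theorem of \cite{baik2006els} for spiked sample covariance matrices. Since $\lambda_j(Y) = \sqrt{\lambda_j(YY')}$, I would first study the top eigenvalues of $M_n := YY'$. Expanding,
\[
M_n \, = \, AA' \,+\, n^{-1/2}(AW' + WA') \,+\, n^{-1}WW',
\]
so $M_n$ is a rank-$r$ additive perturbation of $n^{-1}WW'$. By the orthogonal invariance of $W$ (used in Proposition~\ref{prop:onlySVmatter}), the joint distribution of the singular values of $Y$ depends on $A$ only through its singular values, so I may assume without loss of generality that $A$ is diagonal with entries $\lambda_1(A) \geq \cdots \geq \lambda_r(A)$ in the first $r$ diagonal positions.

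Next I would identify the outlier eigenvalues of $M_n$ with the spiked Wishart setting. The natural ``population covariance'' in our problem is $\Sigma := I_m + AA'$, which has $r$ spike eigenvalues $1 + \lambda_j^2(A)$ above a baseline of $1$. Writing the spike magnitudes as $\ell_j := \lambda_j^2(A)$, Baik--Silverstein's theorem states: if $\ell_j > \sqrt{c}$, the $j$-th largest sample eigenvalue converges in probability to $(1+\ell_j)(1 + c/\ell_j) = 1 + \ell_j + c + c/\ell_j$; otherwise it converges to $(1+\sqrt{c})^2$. Substituting $\ell_j = \lambda_j^2(A)$ rewrites the threshold as $\lambda_j(A) > c^{1/4}$ and yields $\lambda_j^2(Y) \to 1 + \lambda_j^2(A) + c + c/\lambda_j^2(A)$; taking square roots produces exactly the stated formula.

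For the tail singular values $\lambda_{r+1}(Y), \ldots, \lambda_{m\wedge n}(Y)$, the claim follows from Proposition~\ref{prop:MP} combined with the observation that a rank-$r$ additive perturbation shifts at most $r$ eigenvalues (by Cauchy/Weyl interlacing), so the empirical distribution of the remaining eigenvalues of $M_n$ agrees in the limit with that of $n^{-1}WW'$, which is Mar\v{c}enko--Pastur.

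The main obstacle is the identification in the second step: $M_n$ is not literally a sample covariance $n^{-1}\tilde X \tilde X'$ of iid $N(0, I_m + AA')$ columns but an additive low-rank deformation of $n^{-1}WW'$. Rigorously establishing that the top eigenvalues obey the same Baik--Silverstein formula requires either a direct secular-equation argument that exploits the low-rank structure to reduce the outlier problem to a finite-dimensional determinantal condition, or appeal to general results on finite-rank additive perturbations of random matrices (the non-symmetric analog of the Wigner-case results in \cite{capitaine2009largest, feral2007largest}). This correspondence is the technical heart of the proof; once it is in place, the conclusion is an algebraic rewriting of the phase-transition formula.
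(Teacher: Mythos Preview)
Your high-level reduction is exactly the one the paper uses: set $\tau_j = 1 + \lambda_j^2(A)$, apply Baik--Silverstein to get the limit of $\lambda_j^2(Y)$, and take square roots. The algebra and the threshold translation $\tau_j > 1+\sqrt{c} \Leftrightarrow \lambda_j(A) > c^{1/4}$ are correct, and your interlacing argument for the bulk is fine.

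Where you diverge is precisely at the step you flag as the ``main obstacle.'' You propose to treat $YY'$ directly as a low-rank additive deformation of $n^{-1}WW'$ and to close the gap either by a secular-equation computation or by invoking general finite-rank perturbation results. That can be made to work, but it is heavier than necessary. The paper instead closes the gap with a short \emph{coupling} argument that reduces the problem to a genuine spiked Wishart matrix, so that Baik--Silverstein applies verbatim. Concretely: build the spiked model $X$ with covariance $T = \mathrm{diag}(\tau_1,\dots,\tau_r,1,\dots,1)$ and decompose $X = X_1 + Z$ with $X_1$ having independent $N(0,T-I)$ columns and $Z$ i.i.d.\ standard normal, independent of $X_1$. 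Let $U_1 D_1 V_1'$ be the SVD of $X_1$ and set $\tilde W := U_1' Z V_1$; by Gaussian rotational invariance $\tilde W$ has the law of $W$, so $\tilde Y := A + n^{-1/2}\tilde W$ has the law of $Y$. Now Mirsky's inequality gives
\[
\sum_j \big[\lambda_j(n^{-1/2}X) - \lambda_j(\tilde Y)\big]^2 \ \le\ \big\| n^{-1/2}U_1'XV_1 - \tilde Y\big\|_F^2 \ = \ \big\| n^{-1/2}D_1 - A\big\|_F^2 \ \topr\ 0,
\]
since $\lambda_j(n^{-1/2}X_1) \topr \sqrt{\tau_j - 1} = \lambda_j(A)$. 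Thus every singular value of $Y$ has the same limit as the corresponding singular value of $n^{-1/2}X$, and the latter is handled by Baik--Silverstein with no further work.

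In short, your proposal is correct but outsources the hard step; the paper's coupling-plus-Mirsky trick is the missing idea that makes the identification with the spiked model rigorous in two lines, without any secular-equation analysis or appeal to additive-perturbation theorems.
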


\begin{prop}
\labelA{prop:Paul}
Let $Y$ follow the asymptotic matrix reconstruction model (\ref{eq:asmodel})
with distinct signal singular values $\lambda_1(A) > \lambda_2(A) > ... > \lambda_r(A)>0$.
Fix $j$ such that $\lambda_j(A) > \sqrt[4]{c}$. Then as $n$ tends to infinity,
\[
\big\langle u_j(Y),u_j(A) \big\rangle^2
\, \topr \,
\Big(1-\frac{c}{\lambda_j^4(A)}\Big) \,\big/\, \Big( 1 + \frac{c}{\lambda_j^2(A)}\Big)
\]
and
\[
\big\langle v_j(Y),v_j(A) \big\rangle^2
\, \topr \,
\Big(1-\frac{c}{\lambda_j^4(A)}\Big) \,\big/\, \Big( 1 + \frac{1}{\lambda_j^2(A)}\Big)
\]
Moreover, if $k=1,\ldots,r$ not equal to $j$ then
$\langle u_j(Y), u_k(A) \rangle \topr 0$ and  
$\langle v_j(Y), v_k(A) \rangle \topr 0$
as $n$ tends to infinity.
\end{prop}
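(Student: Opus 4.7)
The plan is to reduce the problem to a spiked population model and then invoke the asymptotic singular-vector results of Paul (2007), Nadler (2008), and Lee (2010). First, using the orthogonal invariance of $W$ together with Proposition \ref{thm:orthogonal}, I would replace $Y = A + n^{-1/2}W$ by the equivalent model in which $A$ is diagonal, so that $u_k(A) = e_k \in \reals^m$ and $v_k(A) = e_k \in \reals^n$ for $k = 1,\ldots,r$. The quantities $\langle u_j(Y), u_k(A)\rangle^2$ then become squared coordinates of the sample left singular vectors of $Y$, and similarly on the right, so it suffices to analyze those coordinates.

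Next, I would match the $m \times m$ matrix $YY'$ with the sample covariance $S = n^{-1} X X'$ drawn from a spiked population with covariance $\Sigma = I_m + \sum_{k=1}^r \lambda_k^2(A)\, u_k(A) u_k(A)'$, whose top eigenvalues are $\ell_k = 1 + \lambda_k^2(A)$. This correspondence is consistent at the eigenvalue level: the Baik--Silverstein formula $\ell_k(1 + c/(\ell_k - 1))$ for the limiting top sample eigenvalue becomes $1 + \lambda_k^2(A) + c + c/\lambda_k^2(A)$ after substitution, matching the limit of $\lambda_k^2(Y)$ in Proposition \ref{prop:SPM}; the phase-transition threshold $\ell_k > 1 + \sqrt{c}$ likewise translates to $\lambda_k(A) > \sqrt[4]{c}$. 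Paul's theorem for sample eigenvectors in the spiked model then yields, for $\ell_j > 1 + \sqrt{c}$,
\[
\langle \hat\phi_j, \phi_j \rangle^2 \topr \frac{1 - c/(\ell_j - 1)^2}{1 + c/(\ell_j - 1)},
\]
where $\hat\phi_j$ is the $j$-th sample eigenvector of $S$ and $\phi_j$ the corresponding population eigenvector. Identifying $\hat\phi_j$ with $u_j(Y)$, $\phi_j$ with $u_j(A)$, and $\ell_j - 1$ with $\lambda_j^2(A)$ gives the first claim. The vanishing of the cross terms $\langle u_j(Y), u_k(A) \rangle \topr 0$ for $k \neq j$ follows from the orthogonality of the distinct population spike directions together with the same eigenvector concentration results, since distinct limiting eigenvalues force each sample eigenvector to align only with its own population direction.

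For the right singular vectors, I would apply the same analysis to the transposed model $Y' = A' + n^{-1/2} W'$. In its natural scaling this is an asymptotic reconstruction problem with aspect ratio $c' = 1/c$ and effective noise scale $\sqrt{c}$; after rescaling $Y'$ by $1/\sqrt{c}$ to normalize the noise, the effective signal singular values become $\lambda_j(A)/\sqrt{c}$. Substituting $c \mapsto 1/c$ and $\lambda_j \mapsto \lambda_j(A)/\sqrt{c}$ into Paul's formula produces $(1 - c/\lambda_j^4(A))/(1 + 1/\lambda_j^2(A))$, the second claim. The main obstacle is making the spiked-model correspondence in the second paragraph fully rigorous: $YY'$ and the spiked sample covariance $S$ are not literally the same random matrix, even though their limiting spectral behavior agrees. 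To invoke Paul's theorem as cited one must either appeal to a universality argument showing that the eigenvector limits depend only on shared spectral data, or re-run Paul's resolvent expansion directly on $YY' = AA' + n^{-1/2}(AW' + WA') + n^{-1} WW'$, verifying that the mixed term $n^{-1/2}(AW' + WA')$ contributes only at subleading order and does not shift the limiting inner products.
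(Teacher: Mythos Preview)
Your outline matches the paper's strategy at the high level: diagonalize $A$, match to a spiked population model with $\tau_j = 1 + \lambda_j^2(A)$, invoke Paul's eigenvector limit, and transpose for the right singular vectors. You also correctly identify the real gap, namely that $YY'$ is not literally a spiked-model sample covariance, so Paul's theorem cannot be applied by simple identification of $u_j(Y)$ with $\hat\phi_j$. Your two proposed fixes (a universality argument, or redoing the resolvent expansion directly on $YY' = AA' + n^{-1/2}(AW'+WA') + n^{-1}WW'$) would each be a substantial project in its own right, and neither is what the paper does.

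The paper closes the gap by an explicit \emph{coupling} together with Wedin's singular-vector perturbation theorem. Write the spiked-model data as $X = X_1 + Z$ with $X_1$ having independent $N_m(0,T-I)$ columns and $Z$ having i.i.d.\ $N(0,1)$ entries independent of $X_1$. Let $U_1 D_1 V_1'$ be the SVD of $X_1$ and set $\tilde W = U_1' Z V_1$; since $U_1,V_1$ depend only on $X_1$ and $Z$ is orthogonally invariant, $\tilde W \stackrel{d}{=} W$, so $\tilde Y = A + n^{-1/2}\tilde W$ is distributed as $Y$. The point of this construction is that $n^{-1/2}U_1' X V_1 - \tilde Y = n^{-1/2}U_1' X_1 V_1 - A$, and both sides of the latter are diagonal with entries $\lambda_j(n^{-1/2}X_1) - \lambda_j(A) \topr 0$, so $\|n^{-1/2}U_1'XV_1 - \tilde Y\|_F \topr 0$. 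Wedin's theorem (with a separation $\delta>0$ supplied by the distinctness of the $\lambda_j(A)$ and Proposition~\ref{prop:SPM}) then gives $\langle u_j(\tilde Y),\, U_1' u_j(X)\rangle^2 \topr 1$. Combining this with Paul's limit $\langle u_j(X), e_k\rangle^2 \topr \theta_{jk}^2$ and the elementary fact $\langle U_1 e_k, e_k\rangle^2 \topr 1$ yields $\langle u_j(\tilde Y), e_k\rangle^2 \topr \theta_{jk}^2$, which is the claim. This coupling plus perturbation-bound device is the missing technical ingredient in your plan; it sidesteps both universality and a fresh resolvent computation. Your transposition argument for the right singular vectors is exactly what the paper does.
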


The limits established in Proposition \ref{prop:SPM} 
indicate a phase transition.  If the singular value $\lambda_j(A)$ 
is less than or equal to $\sqrt[4]{c}$ then, asymptotically, the
singular value $\lambda_j(Y)$
lies within the support of the Mar{\v{c}}enko-Pastur distribution
and is not distinguishable from the noise singular values. 
On the other hand, if $\lambda_j(A)$ exceeds $\sqrt[4]{c}$ then,
asymptotically, $\lambda_j(Y)$ lies outside the support of the Mar{\v{c}}enko-Pastur distribution,
and the corresponding left and right singular vectors of $Y$ are associated with
those of~$A$ (Proposition \ref{prop:Paul}).


\section{Proposed Reconstruction Method%
\labelA{sec:RMTmethod}}

Assume for the moment
that the variance $\sigma^2$ of the noise is known, 
and equal to one.  
Let $Y$ be an observed $m \times n$ matrix generated from
the additive model $Y = A + n^{-1/2} W$, and let 
\[
Y \,=\, \sum_{j=1}^{m \wedge n} \lambda_j(Y) \, u_j(Y) v_j'(Y)
\]
be the SVD of $Y$.  Following the discussion in Section \ref{sec:invariant}, we
seek an estimate $\widehat{A}$ of the signal matrix $A$ having the form
\[
\widehat A \,=\, \sum_{j=1}^{m \wedge n} c_j \, u_j(Y) v_j'(Y) ,
\]
where each coefficient $c_j$ depends only on the singular values
$\lambda_1(Y), \ldots,$ $\lambda_{m \wedge n}(Y)$ of $Y.$  
We derive $\widehat A$ from the 
limiting relations in Propositions \ref{prop:SPM} 
and \ref{prop:Paul}.   By way of approximation, we treat these
relations as exact in the non-asymptotic setting under study, 
using the symbols $\eqa$, $\leqa$ and $\ga$ to denote limiting 
equality and inequality relations. 

Suppose initially that the singular values and vectors
of the signal matrix $A$ are known.  In this case we wish to find 
coefficients $\{c_j\}$ minimizing  
\[
\mbox{Loss}(A,\widehat A)
\, = \, 
\big\| 
\sum_{j=1}^{m \wedge n} c_j \, u_j(Y) v_j'(Y)
\, - \, 
\sum_{j=1}^{r} \lambda_j(A) \, u_j(A) v_j'(A)
\big\|_F^2 .
\]
Proposition \ref{prop:SPM} shows that asymptotically the information about the singular values of $A$ that are smaller that $\sqrt[4]{c}$ is not recoverable from the singular values of $Y$.
Thus we can restrict the first sum to the first $r_0 = \#\{j:\lambda_j(A) > \sqrt[4]{c}\}$ terms
\[
\mbox{Loss}(A,\widehat A)
\, = \, 
\big\| 
\sum_{j=1}^{r_0} c_j \, u_j(Y) v_j'(Y)
\, - \, 
\sum_{j=1}^{r} \lambda_j(A) \, u_j(A) v_j'(A)
\big\|_F^2.
\]
Proposition \ref{prop:Paul} ensures that the left singular vectors $u_j(Y)$ and $u_k(A)$ are
asymptotically orthogonal for $k=1,\ldots,r$ not equal to $j=1,\ldots,r_0$, and therefore
\[
\mbox{Loss}(A,\widehat A)
\, \eqa \, 
\sum_{j=1}^{r_0} 
\big\| 
c_j \, u_j(Y) v_j'(Y)
\, - \, 
\lambda_j(A) \, u_j(A) v_j'(A)
\big\|_F^2 
\,+\,
\sum_{j=r_0+1}^{r} 
\lambda^2_j(A)
.
\]
Fix $1 \leq j \leq r_0$.  Expanding the $j$-th term in the 
above sum gives
\begin{equation*}
\begin{split}
\big\| \lambda_j(A) \, u_j(A) & v_j'(A) \ - \ c_j \, u_j(Y)  v_j'(Y) \big\|_F^2 
\\[0.05in] & = \,
c_j^2 \, \big\|  u_j(Y) v_j'(Y) \big\|_F^2 
\  + \ 
\lambda_j^2(A) \, \big\| u_j(A) v_j'(A) \big\|_F^2
\\[0.05in] & \phantom{ = \,} \ - \ 
2 c_j \lambda_j(A) \,  \big\langle u_j(A) v_j'(A),  u_j(Y) v_j'(Y) \big\rangle
\\[0.05in] & = \,
\lambda_j^2(A) 
\  + \ 
c_j^2  
\ - \ 
2 c_j \lambda_j(A) \,  
\big\langle u_j(A), u_j(Y) \big\rangle \,
\big\langle v_j(A), v_j(Y) \big\rangle . \\
\end{split}
\end{equation*}
Differentiating the last expression with respect to $c_j$ yields the optimal value
\begin{equation}
\label{eq:optcoeff}
c_j^*  \, = \,
\lambda_j(A) \,
\big\langle u_j(A), u_j(Y) \big\rangle \,
\big\langle v_j(A), v_j(Y) \big\rangle
.
\end{equation}

In order to estimate the coefficient $c_j^*$ we consider
separately singular values of $Y$ that are at most, or 
greater than $1 + \sqrt{c}$, where $c = m/n$ is the aspect
ratio of $Y$.
By Proposition \ref{prop:SPM}, the asymptotic relation
$\lambda_j(Y) \leqa 1 + \sqrt{c}$ implies 
$\lambda_j(A) \leq \sqrt[4]{c}$, and 
in this case the $j$-th singular value of $A$ is not recoverable from $Y$.
Thus if $\lambda_j(Y) \leq 1 + \sqrt{c}$ we set the corresponding coefficient $c_j^* = 0$. 

On the other hand, the asymptotic relation
$\lambda_j(Y) \ga 1 + \sqrt{c}$ implies that $\lambda_j(A) > \sqrt[4]{c}$, and that
each of the inner products in (\ref{eq:optcoeff}) are
asymptotically positive.  
The displayed equations in 
Propositions \ref{prop:SPM} and \ref{prop:Paul} can then be used
to obtain estimates of each term in (\ref{eq:optcoeff}) based only
on the (observed) singular values of $Y$ and its aspect ratio $c$.  
These equations yield the following relations:
\[
\widehat \lambda_j^2(A)
\ = \
\frac{1}{2}
\bigg[\lambda_j^2(Y) - (1+c)  + \sqrt{ [\lambda_j^2(Y) - (1+c)]^2 - 4c } \bigg]
\ \mbox{ estimates } \ \lambda_j^2(A) ,
\]
\[
\hat \theta_j^2
\, = \,  
\left(1-\frac{c}{\widehat \lambda_j^4(A)} \right)
\,\big/\, \left( 1 + \frac{c}{\widehat \lambda_j^2(A)} \right)
\ \mbox{ estimates } \ \langle u_j(A), u_j(Y) \rangle^2,
\]

\[
\hat \phi_j^2
\, = \,  
\left(1-\frac{c}{\widehat \lambda_j^4(A)} \right)
\,\big/\, \left( 1 + \frac{1}{\widehat \lambda_j^2(A)} \right)
\ \mbox{ estimates } \ \langle v_j(A), v_j(Y) \rangle^2.
\]
\vskip.1in
\noindent
With these estimates in hand, the proposed reconstruction 
scheme is defined via the equation
\begin{equation}
\labelA{eq:RMT-KV}
G_o^{RMT}(Y)
\, = \,
\sum_{\lambda_j(Y) > 1 + \sqrt{c}} 
\widehat \lambda_j(A) \, \hat \theta_j \, \hat \phi_j \,
u_j(Y)  v_j'(Y),
\end{equation}
where $\widehat \lambda_j(A)$, $\hat \theta_j$, and $\hat \phi_j$ 
are the positive square roots of the estimates defined above.

\vskip.1in

The RMT method shares features with both hard and soft thresholding.
It sets to zero singular values of $Y$ 
smaller than the threshold ($1+\sqrt{c}$), and it shrinks the remaining singular values towards zero. 
However, unlike soft thresholding the amount of shrinkage depends on the singular values,
the larger singular values are shrunk less than the smaller ones.
This latter feature is similar to that of LASSO type estimators based on an $L_q$ penalty  with $0 < q <1$ (also known as bridge estimators \citealp{fu1998penalized}).
It is important to note that, unlike hard and soft thresholding schemes, the 
proposed RMT method has no tuning parameters.  The only unknown, the noise 
variance, is estimated within the procedure.

\vskip 0.1in

In the general version of the matrix reconstruction problem, 
the variance $\sigma^2$ of the noise is not known.  In this case, 
given an estimate $\widehat \sigma^2$ of $\sigma^2$, such
as that described below, we may define
\begin{equation}
\labelA{eq:RMT-EV}
G^{RMT}(Y)
\, = \,
\widehat \sigma \ G_o^{RMT} \left( \frac{Y}{\widehat \sigma} \right) ,
\end{equation}
where $G_o^{RMT}(\cdot)$ is the estimate defined in 
(\ref{eq:RMT-KV}).

\subsection{Estimation of the Noise Variance%
\labelA{sec:noisevar}}

Let $Y$ be derived from the asymptotic reconstruction model
$Y \, = \, A + \sigma n^{-1/2} W$ with sigma unknown.
While it is natural to try to estimate $\sigma$ from the
entries of $Y$, the following general results indicate that,
under mild conditions, it is sufficient to consider estimates
based on the singular values of $Y$.  The results and their
proofs parallel those in Section \ref{}.

\begin{definition}
A function $s(\cdot) : \reals^{m\times n} \to \reals$ is orthogonally invariant 
if for any $m \times n$ matrix $Y$ and any orthogonal matrices $U$ and 
$V$ of appropriate sizes, $s(Y)=s(UYV')$. 
\end{definition}

\begin{prop}
\label{prop:sigma1}
A function $s(\cdot) : \reals^{m\times n} \to \reals$ is orthogonally invariant if and only if 
$s(Y)$ depends only on the singular values of $Y$.
\end{prop}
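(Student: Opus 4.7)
The plan is to prove the two directions of the equivalence separately, each via a short direct argument built on the singular value decomposition.

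For the easier ($\Leftarrow$) direction, I would suppose $s(Y)$ is a function only of the singular values $\lambda_1(Y) \geq \cdots \geq \lambda_{m \wedge n}(Y)$. The key observation is that for any orthogonal matrices $U$ and $V$ of appropriate sizes, the matrix $UYV'$ has exactly the same singular values as $Y$: if $Y = U_0 D V_0'$ is an SVD of $Y$, then $UYV' = (UU_0) D (VV_0)'$ is an SVD of $UYV'$, and $UU_0$, $VV_0$ are orthogonal. Hence $s(UYV')$ and $s(Y)$ are evaluated on the same tuple of singular values, giving $s(UYV') = s(Y)$.

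For the ($\Rightarrow$) direction, suppose $s$ is orthogonally invariant. Let $Y$ be an arbitrary $m \times n$ matrix with SVD $Y = U_0 D V_0'$, where $D = \mathrm{diag}(\lambda_1(Y),\ldots,\lambda_{m \wedge n}(Y))$. Applying the orthogonal invariance identity $s(Y) = s(UYV')$ with the choice $U = U_0'$ and $V = V_0'$ (so $V' = V_0$), I get
\[
s(Y) \, = \, s(U_0' Y V_0) \, = \, s(U_0'\, U_0 D V_0'\, V_0) \, = \, s(D).
\]
Thus $s(Y)$ equals $s(D)$, a quantity determined entirely by the singular values of $Y$. This shows $s$ is a function only of the singular values.

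The main technical point to be careful about is the non-uniqueness of the SVD when singular values coincide or when signs of singular vectors are flipped: different SVDs of $Y$ produce the same diagonal matrix $D$ (with the conventional decreasing order), so the reduction $s(Y) = s(D)$ is unambiguous. There is no hard obstacle here; the result is essentially a direct corollary of the two facts that (i) orthogonal multiplication preserves singular values and (ii) any matrix can be diagonalized by left and right orthogonal multiplications.
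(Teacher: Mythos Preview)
Your proof is correct and follows essentially the same approach the paper indicates: the paper does not spell out a separate proof for this proposition but states that it ``parallels'' the arguments of Section~\ref{sec:invariant}, which are precisely the SVD-diagonalization idea you use (reduce $Y$ to $D$ via $s(Y)=s(U_0'YV_0)=s(D)$ for the forward direction, and use preservation of singular values under orthogonal multiplication for the converse). Your remark on the well-definedness of $D$ despite non-uniqueness of the SVD is a nice touch that the paper leaves implicit.
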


\begin{prop}
\label{prop:sigma2}
Let $s(\cdot) : \reals^{m\times n} \to \reals$.
Then there is an orthogonally invariant function $\tilde s(\cdot)$ with the following property.
Let $\mathbf{A}$ and $W$ be independent $m \times n$ random matrices with orthogonally invariant distributions, and let $Y = \mathbf{A} + \sigma n^{-1/2}W$ for some $\sigma$. 
Then $\tilde s(Y)$ has the same expected value as $s(Y)$ and a smaller or equal variance.
\end{prop}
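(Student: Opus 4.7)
The plan is to mimic the averaging argument used in the proof of Theorem \ref{thm:ortoptimality}, but now applied to a scalar-valued function and with a variance comparison instead of a loss comparison. Let $\mathbf{U}$ and $\mathbf{V}$ be Haar-distributed random orthogonal matrices of sizes $m \times m$ and $n \times n$ respectively, independent of each other and of $\mathbf{A}$ and $W$. Define
\begin{equation*}
\tilde s(Y) \, = \, \mathbb{E}\!\left[ s(\mathbf{U} Y \mathbf{V}') \,\big|\, Y \right].
\end{equation*}
The first step is to check that $\tilde s(\cdot)$ is orthogonally invariant as a deterministic function. For any orthogonal matrices $C$ and $D$, translation invariance of Haar measure (exactly as in the proof of Theorem \ref{thm:ortoptimality}) gives $\mathbf{U} C$ and $D \mathbf{V}$ the same distribution as $\mathbf{U}$ and $\mathbf{V}$, so $\tilde s(C Y D') = \mathbb{E}[s(\mathbf{U} C Y D' \mathbf{V}') \mid Y] = \tilde s(Y)$.

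Next, I would observe that because $\mathbf{A}$ and $W$ are independent with orthogonally invariant distributions, $Y = \mathbf{A} + \sigma n^{-1/2} W$ also has an orthogonally invariant distribution. Since $(\mathbf{U}, \mathbf{V})$ is independent of $Y$, this immediately yields
\begin{equation*}
\mathbf{U} Y \mathbf{V}' \,\stackrel{d}{=}\, Y,
\end{equation*}
and hence $s(\mathbf{U} Y \mathbf{V}')$ and $s(Y)$ have the same distribution. The equality of expectations follows from the tower property:
\begin{equation*}
\mathbb{E}[\tilde s(Y)] \, = \, \mathbb{E}\!\left[\mathbb{E}[s(\mathbf{U} Y \mathbf{V}') \mid Y]\right] \, = \, \mathbb{E}[s(\mathbf{U} Y \mathbf{V}')] \, = \, \mathbb{E}[s(Y)].
\end{equation*}

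The variance comparison then comes from the law of total variance applied to the random variable $s(\mathbf{U} Y \mathbf{V}')$ conditioning on $Y$:
\begin{equation*}
\mathrm{Var}(s(\mathbf{U} Y \mathbf{V}')) \, = \, \mathbb{E}\!\left[\mathrm{Var}(s(\mathbf{U} Y \mathbf{V}') \mid Y)\right] + \mathrm{Var}\!\left(\mathbb{E}[s(\mathbf{U} Y \mathbf{V}') \mid Y]\right).
\end{equation*}
The second term on the right is exactly $\mathrm{Var}(\tilde s(Y))$, the first term is nonnegative, and the left-hand side equals $\mathrm{Var}(s(Y))$ by the distributional identity above; hence $\mathrm{Var}(\tilde s(Y)) \le \mathrm{Var}(s(Y))$. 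Finally, by Proposition \ref{prop:sigma1}, orthogonal invariance of $\tilde s(\cdot)$ means it depends only on the singular values of its argument, which is the form in which the result will be used in Section \ref{sec:noisevar}.

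The only mildly subtle point — which I would make sure to state explicitly rather than grind through — is the measurability/regularity needed to define the conditional expectation $\tilde s(Y)$ and to invoke the law of total variance; in practice one either assumes $s$ is bounded (sufficient for the intended application to a noise-variance estimator that is a function of bounded singular values after truncation) or works within $L^2$ so that all the conditional expectations and variances appearing above are well-defined and finite.
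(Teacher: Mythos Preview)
Your proof is correct and is precisely the argument the paper has in mind: the paper does not write out the proof but states that it parallels that of Theorem~\ref{thm:ortoptimality}, and your Haar-averaging construction $\tilde s(Y)=\mathbb{E}[s(\mathbf{U}Y\mathbf{V}')\mid Y]$ together with the law of total variance is exactly that parallel. If anything, your route is slightly more streamlined than a literal transcription of the Theorem~\ref{thm:ortoptimality} argument, since you observe directly that $Y$ inherits an orthogonally invariant distribution from $\mathbf{A}$ and $W$ and use $\mathbf{U}Y\mathbf{V}'\stackrel{d}{=}Y$ in one step, rather than conditioning on $\mathbf{A}$ and $W$ separately.
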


Based Propositions \ref{prop:sigma1} and \ref{prop:sigma2} we restrict our attention to the estimates of $\sigma$ that depend only on the singular values of $Y$.
It follows from Proposition \ref{prop:SPM} that the empirical distribution of the 
$(m - r)$ singular values $S = \{ \lambda_j(Y/\sigma) : \lambda_j(A) = 0 \}$ 
converges weakly to a distribution with density (\ref{eq:MPdensity})
supported on the interval $[|1 - \sqrt{c}|, 1 + \sqrt{c}]$.  
Following the general approach outlined in \cite{gyorfi1996minimum},
we estimate $\sigma$ by minimizing the Kolmogorov-Smirnov
distance 
between 
the observed sample distribution of the singular values of $Y$ and that predicted by theory.
Let $F$ be the CDF of the density (\ref{eq:MPdensity}). 
For each $\sigma > 0$ let $\widehat S_\sigma$ be the set of 
singular values $\lambda_j(Y)$
that fall in the interval $[\sigma |1 - \sqrt{c}|, \sigma (1 + \sqrt{c})]$, and let
$\widehat F_\sigma$ be the empirical CDF of $\widehat S_\sigma$.
Then
\[
K(\sigma) \,=\, \sup_s |F(s/\sigma) - \widehat F_\sigma(s)|
\]
is the Kolmogorov-Smirnov distance between the empirical
and theoretical singular value distribution functions, and we define
\begin{equation}
\label{sigmahat}
\hat \sigma(Y) \,=\, \mathop{\arg\min}_{\sigma > 0} K(\sigma)
\end{equation}
to be the value of $\sigma$ minimizing $K(\sigma)$.  
A routine argument shows that the estimator $\hat \sigma$ is 
scale invariant, in the sense that
$\hat \sigma(\beta \, Y) = \beta \, \hat \sigma(Y)$
for each $\beta > 0$.

By considering the jump points of the empirical CDF $\widehat F_\sigma(s)$,
the supremum in $K(\sigma)$ simplifies to
\[
K(\sigma) \,=\, \max_{s_i \in \widehat S_\sigma} \left| F(s_i/\sigma) - \frac{i-1/2}{|\widehat S_\sigma|} \right| + \frac{1}{2|\widehat S_\sigma|}
,
\]
where $\{s_i\}$ are the ordered elements of $\widehat S_\sigma$.
The objective function $K(\sigma)$ is discontinuous at points where the $\widehat S_\sigma$ changes, so we minimize it over a fine grid of points $\sigma$ in the range where $|\widehat S_\sigma| > (m \wedge n)/2$ and $\sigma (1 + \sqrt{c}) < 2 \lambda_1(Y)$. The closed form of the cumulative distribution function $F(\cdot)$ is presented in Section \ref{sec:cdf}.

\section{Simulations%
\labelA{sec:simulations}}

We carried out a simulation study to evaluate the performance 
of the RMT reconstruction 
scheme $G^{RMT}(\cdot)$ defined in (\ref{eq:RMT-EV}) using the variance
estimate $\widehat \sigma$ in (\ref{sigmahat}).
The study compared the performance of $G^{RMT}(\cdot)$ to 
three alternatives: the best hard thresholding reconstruction scheme, 
the best soft thresholding reconstruction scheme, and the best orthogonally 
invariant reconstruction scheme.  Each of the three competing alternatives is 
an oracle-type procedure that is based on information about the signal
matrix $A$ that is not available to $G^{RMT}(\cdot)$.

\subsection{Hard and Soft Thresholding Oracle Procedures%
\labelA{sec:simparams}}
Hard and soft thresholding schemes require specification of a 
threshold parameter that can depend on the observed matrix $Y$.
Estimation of the noise variance can be incorporated into the 
choice of the threshold parameter.  In order to compare the 
performance of $G^{RMT}(\cdot)$ against every possible hard
and soft thresholding scheme, we define  oracle 
procedures
\begin{equation}
\label{eq:HSO}
G^{H}(Y)
\, = \,
g_{\lambda^*}^{H} (Y ) 
\ \mbox{ where } \ 
\lambda^*
\ = \ 
\mathop{\arg\min}_{\lambda > 0} \big\| A - g_\lambda^H (Y) \big\|_F^2
\end{equation}
\begin{equation}
\label{eq:HSO}
G^{S}(Y)
\, = \,
g_{\nu^*}^{S} (Y ) 
\ \mbox{ where } \ 
\nu^*
\ = \ 
\mathop{\arg\min}_{\nu > 0} \big\| A - g_\nu^S (Y) \big\|_F^2
\end{equation}
that make use of the signal $A$.  By definition, the loss
$\| A - G^{H}(Y) \|_F^2$ of $G^{H}(Y)$ is less than that of any hard thresholding 
scheme, and similarly the loss of $G^{S}(Y)$ is less than that of any
soft thresholding procedure.  In effect, the oracle procedures have access to 
both the unknown signal matrix $A$ and the unknown variance $\sigma$.  They
are constrained only by the form of their respective thresholding families.
The oracle procedures are not realizable in practice.


\subsection{Orthogonally Invariant Oracle Procedure%
\labelA{sec:oracle}}


As shown in Corrolary~\ref{cor:DIDO}, every orthogonally invariant reconstruction scheme 
$g(\cdot)$ has the form
\[
g(Y)
\, = \,
\sum_{j =1}^{m \wedge n} c_j \, u_j(Y) v_j(Y)'
,
\]
where the coefficients $c_j$ are functions of the singular values of $Y$.  

The orthogonally invariant oracle scheme has coefficients $c_j^o$ 
minimizing the loss
\[
\big\| A - \sum_{j =1}^{m \wedge n} c_j \, u_j(Y) v_j(Y)' \big\|_F^2
\]
over all choices $c_j$.  As in the case with the hard and soft thresholding oracle schemes,
the coefficients $c_j^o$ depend on the signal matrix $A$, which in practice is unknown.

The (rank one) matrices $\{u_j(Y) v_j(Y)'\}$ form an orthonormal basis of 
an $m \wedge n$-dimensional subspace of the $mn$-dimensional
space of all $m \times n$ matrices.  Thus the optimal coefficient $c_j^o$ is
simply the matrix inner product $\langle A, u_j(Y) v_j(Y)' \rangle$, and the
orthogonally invariant oracle scheme has the form of a projection
\begin{equation}
\labelA{eq:OIoracle}
G^{*}(Y)
\, = \,
\sum_{j =1}^{m \wedge n}  \big\langle A, u_j(Y) v_j(Y)' \big\rangle \, u_j(Y) v_j(Y)'.
\end{equation}
By definition, for any orthogonally invariant reconstruction
scheme $g(\cdot)$ and observed matrix $Y$, we have
$\| A - G^{*}(Y) \|_F^2 \leq \| A - g(Y) \|_F^2$.


\subsection{Simulations}
We compared the reconstruction schemes 
$G^{H}(Y), \ G^{S}(Y),$ and 
$G^{RMT}(Y)$ to $G^{*}(Y)$ on a wide variety of signal matrices 
generated according
to the model (\ref{eq:model}).  As shown in Proposition \ref{prop:onlySVmatter}, 
the distribution of the loss $\| A - G(Y) \|_F^2$ depends only on the singular 
values of $A$, so we considered only diagonal signal matrices.
As the variance estimate used in $G^{RMT}(\cdot)$ is scale invariant,
all simulations were run with noise of unit variance.  (Estimation of noise
variance is not necessary for the oracle reconstruction schemes.)

\subsubsection{Square Matrices}

Our initial simulations considered $1000 \times 1000$ square matrices. 
Signal matrices $A$ were generated using three parameters: the rank $r$;
the largest singular value $\lambda_1(A)$; and the decay profile
of the remaining
singular values.  We considered ranks $r \in \{1, 3, 10, 32, 100\}$
corresponding to successive powers of $\sqrt{10}$ up to $(m \wedge n)/10$, and 
maximum singular values $\lambda_1(A) \in \{0.9,1,1.1,...,10\}\sqrt[4]{c}$
falling below and above the critical threshold of $\sqrt[4]{c} = 1$.
We considered several coefficient decay profiles: (i) all coefficients equal;
(ii) linear decay to zero; (iii) linear decay to $\lambda_1(A)/2$; and
(iv) exponential decay as powers of 0.5, 0.7, 0.9, 0.95, or 0.99.
Independent noise matrices $W$ were generated for each signal matrix $A$. 
All reconstruction schemes were then applied to the resulting matrix $Y = A + n^{-1/2} W$.
The total number of generated signal matrices was 3,680.


\begin{figure}[htbp]
\begin{center}
\includegraphics[width=6cm]{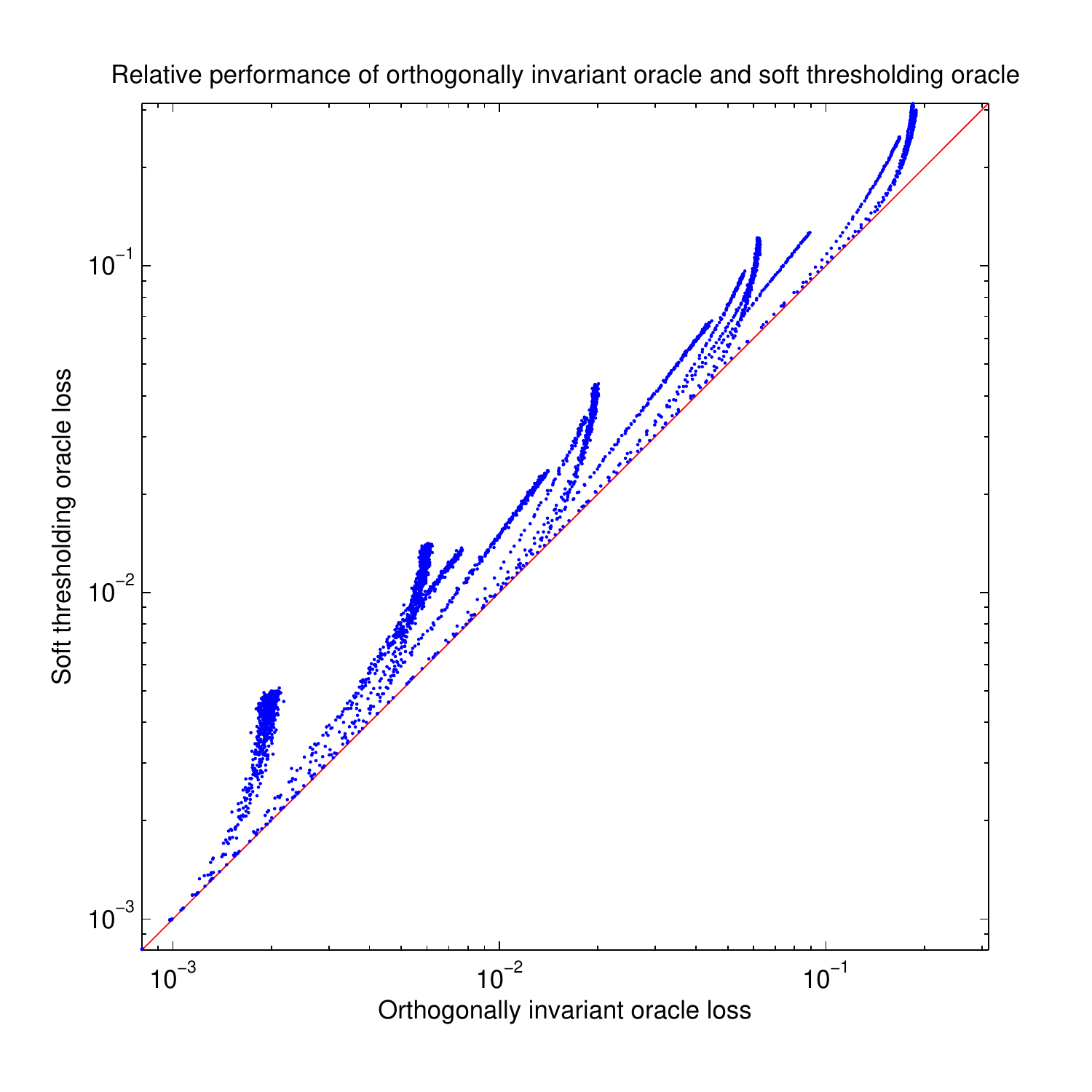}%
~
\includegraphics[width=6cm]{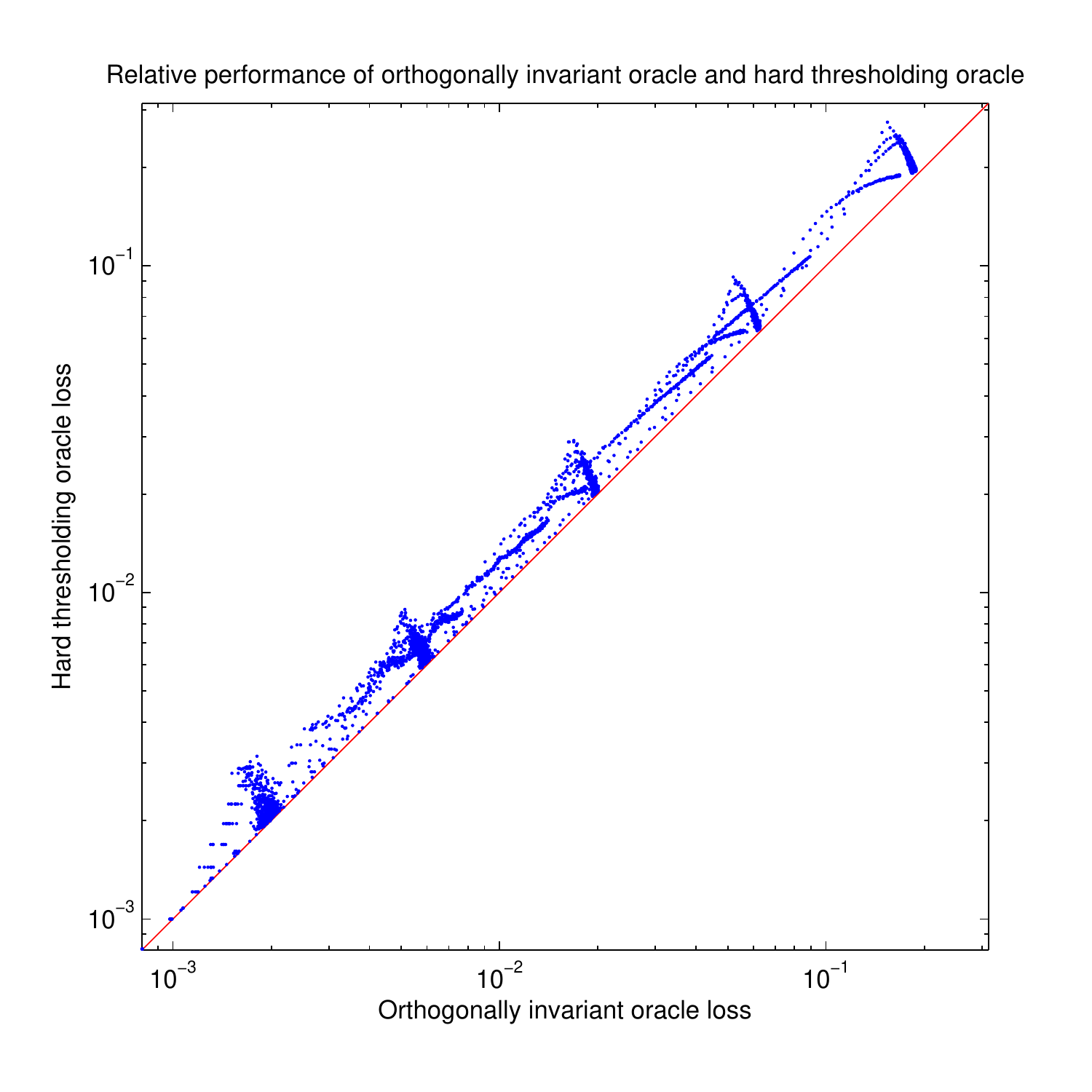}%
\caption{Relative performance of soft and hard thresholding  method against the orthogonally invariant 
oracle for $1000 \times 1000$ matrices.}
\label{fig:3plots1}
\label{fig:3plots2}
\end{center}
\end{figure}

\comment{
\begin{figure}[htbp]
\begin{center}
\includegraphics[width=4cm]{RMT_Figures/1k_1k_RS.pdf}%
\caption[Relative performance of hard  thresholding and OI oracle methods.]%
{Relative performance of hard thresholding and orthogonally invariant oracle for $1000 \times 1000$ matrices.}
\label{fig:3plots2}
\end{center}
\end{figure}
}

\begin{figure}[htbp]
\begin{center}
\includegraphics[width=6cm]{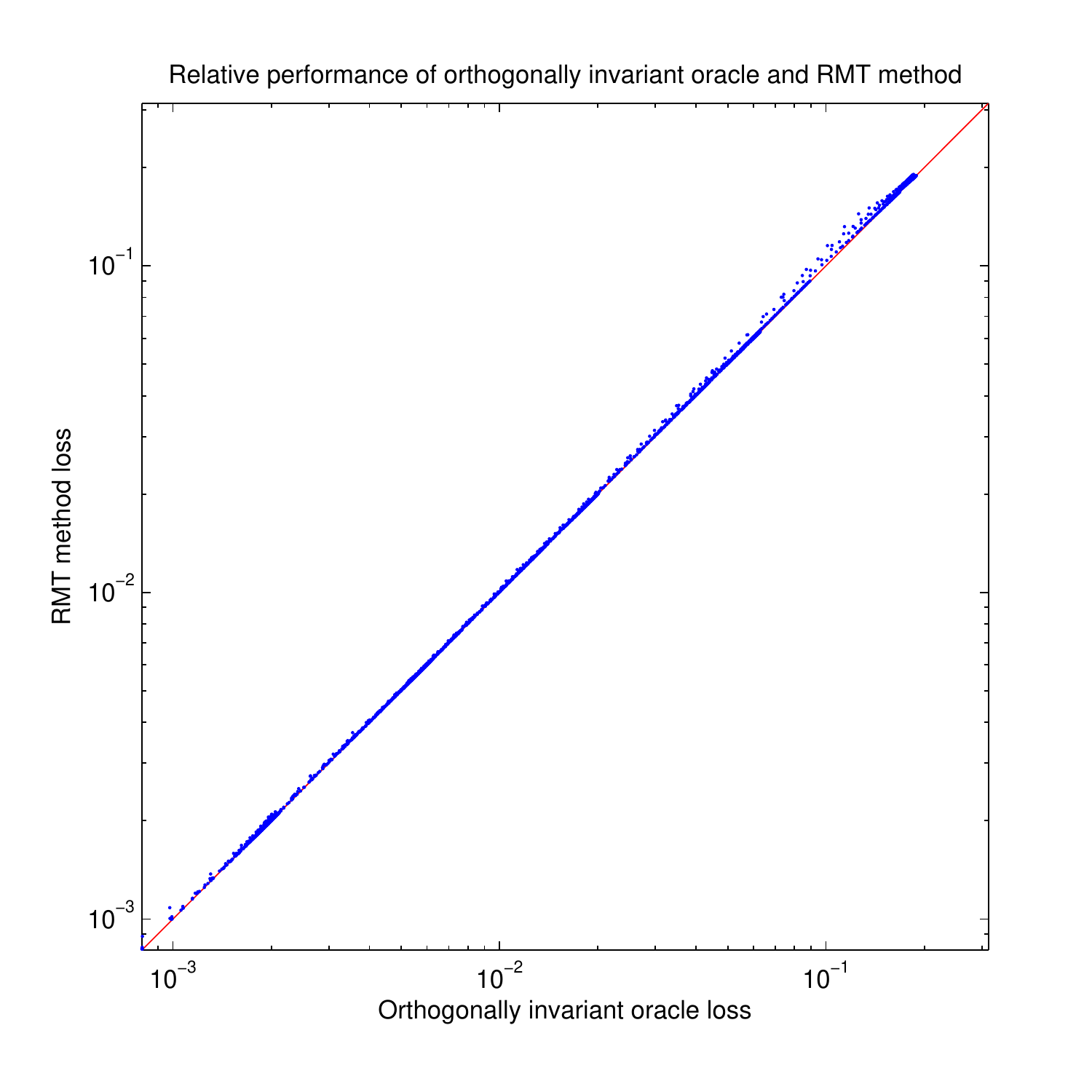}%
\caption[Relative performance of RMT method and OI oracle.]%
{Relative performance of RMT method and orthogonally invariant oracle method for $1000 \times 1000$ matrices.}
\label{fig:3plots3}
\end{center}
\end{figure}

Figures \ref{fig:3plots1} and \ref{fig:3plots3} illustrate,
respectively, the loss 
of the best soft thresholding, best hard thresholding and RMT reconstruction methods
(y axis) relative to the best orthogonally invariant scheme (x axis).
In each case the diagonal represents the performance of the orthogonally
invariant oracle: points farther from the diagonal represent worse performance.
The plots show clearly that $G^{RMT}(\cdot)$
outperforms the oracle schemes $G^{H}()$ and $G^{S}(\cdot)$, and 
has performance comparable to that of the orthogonally invariant oracle.
In particular, $G^{RMT}(\cdot)$
outperforms any hard or soft thresholding scheme, even if the latter schemes 
have access to the unknown variance $\sigma$ and the signal matrix $A$.

In order to summarize the results of our simulations, for each
scheme $G(\cdot)$ and for each matrix $Y$ 
generated from a signal matrix $A$ we calculated the relative excess 
loss of $G()$ with respect to $G^*()$:
\begin{equation}
\labelA{eq:REL}
\mbox{REL}(A, G(Y)) 
\, = \,
\frac{\mbox{Loss}( A, G(Y) )}{\mbox{Loss}( A, G^*(Y) )} - 1
\end{equation}
The definition of $G^*()$ ensures that relative excess loss is non-negative.  The
average REL of $G^{S}(\cdot)$, $G^{H}()$, and $G^{RMT}(\cdot)$ across the
3680 simulated $1000 \times 1000$ matrices was 68.3\%, 18.3\%, and 0.61\% respectively. 
Table~\ref{tbl:RELsquare} summarizes these results, and the results of analogous
simulations carried out on square matrices of different dimensions.
The table clearly shows the strong performance of RMT method for matrices with 
at least $50$ rows and columns. Even for $m=n=50$, the average relative excess loss of the 
RMT method is almost twice smaller then those of the oracle soft and hard thresholding methods. 

\vskip.1in

\begin{table}[htbp]
\begin{center}
\caption[AREL of different methods for square matrices.]%
{Average relative excess losses of oracle soft thresholding, oracle hard thresholding 
and the proposed RMT reconstruction method for square matrices of different dimensions.
\labelA{tbl:RELsquare}
}
\begin{tabular}{|c|r|r|r|r|r|r|r|r|r|r|r|r|} \hline
 \multicolumn{2}{|r|}{Matrix size (square)} &  2000 &  1000 &   500 &   100 &    50 \\  \hline  \hline
 & $G^{{}_S}(\cdot)$ &  0.740 & 0.683 & 0.694 & 0.611 & 0.640 \\ \cline{2-7}
\ Scheme \
 & $G^{{}_H}(\cdot)$ & 0.182 & 0.183 & 0.178 & 0.179 & 0.176 \\  \cline{2-7}
 &  $G^{{}_{RMT}}(\cdot)$ & 0.003 & 0.006 & 0.008 & 0.029 & 0.071 \\ \hline
\end{tabular}%
\end{center}
\end{table}

\subsubsection{Rectangular Matrices}

We performed simulations for rectangular matrices of different dimensions
$m,n$ and different aspect ratios $c = m/n$.   
For each choice of dimensions $m,n$ we simulated target matrices using the same rules as in the square case:
rank $r \in \{1, 3, 10, 32, \ldots\}$ not exceeding $(m \wedge n) / 10$,
maximum singular values $\lambda_1(A) \in \{0.9,1,1.1,...,10\} \sqrt[4]{c}$,
and coefficients decay profiles like those above.
A summary of the results is given in Table \ref{tbl:RELaspect}, which shows the average REL for matrices with 2000 rows and 10 to 2000 columns. Although random matrix theory used to construct the RMT scheme requires $m$ and $n$ to tend to infinity and at the same rate, the numbers in Table \ref{tbl:RELaspect} clearly show that the performance of the RMT scheme is excellent even for small $n$, where average REL ranges between 0.3\% and 0.54\%. 
The average REL of soft and hard thresholding are above 18\% in each of the simulations.

\begin{table}[htbp]
\begin{center}
\caption[AREL of different methods for rectangular matrices.]
{Average relative excess loss of oracle soft thresholding, oracle hard thresholding, 
and RMT reconstruction schemes for matrices with different dimensions
and aspect ratios.
\labelA{tbl:RELaspect}
}
\begin{tabular}{|c|r|r|r|r|r|r|r|r|r|r|r|r|} \hline
Matrix  &     $m$ &  2000 &  2000 &  2000 &  2000 &  2000 &   2000 \\ \cline{2-8}
size  &     $n$ &  2000 &  1000 &   500 &   100 &    50 &     10 \\ \hline \hline
&  $G^{{}_S}(\cdot)$ & 0.740 & 0.686 & 0.653 & 0.442 & 0.391 & 0.243 \\ \cline{2-8}
Method 
 &  $G^{{}_H}(\cdot)$ & 0.182 & 0.188 & 0.198 & 0.263 & 0.292 & 0.379 \\ \cline{2-8}
 &  $G^{{}_{RMT}}(\cdot)$ &  0.003 & 0.004 & 0.004 & 0.004 & 0.004 & 0.005 \\ \hline
\end{tabular}
\end{center}
\end{table}


\comment{

\subsection{Simulation Study of Spiked Population Model and Matrix Reconstruction%
\labelA{sec:SPMtest}}


In Section \ref{sec:proofs} we have built a connection between matrix reconstruction model and spiked population model. 
The most complicated part is connection between the non-random signal matrix $A$ from matrix reconstruction model with a random matrix $n^{-1/2}X_1$, a part of $X$ from spiked population model.
We used this connection to translate several theorems from random matrix theory to determine how the singular values of the unobserved matrix $A$ translate into the singular values of the observed matrix $Y = A + n^{-1/2}W$.

One may question whether this prediction works well and if it does, whether the prediction is better or worse for the matrix reconstruction model compared to the spiked population model. To address this question we have performed additional simulations.

For square matrices of size $m=n= 1000$  we considered rank one signal matrices with singular value $\alpha = 1,2,\ldots,1000$. For each signal matrix $A$, an independent copy of the noise matrix $W$ was generated along with the observed matrix $Y = A + W$. For each $\alpha$ the matrix~$X$ from the matching spiked population model was generated as $X = T^{1/2} W$, where $T = \mbox{diag}(1 + \alpha^2/n,1,\ldots,1)$. The largest singular values are then calculated for both $Y$ and $X$ and compared to the prediction based on Theorem \ref{thm:BS} and Proposition \ref{prop:SPM}. 
\begin{figure}[htbp]
\begin{center}
\includegraphics[width=10cm]{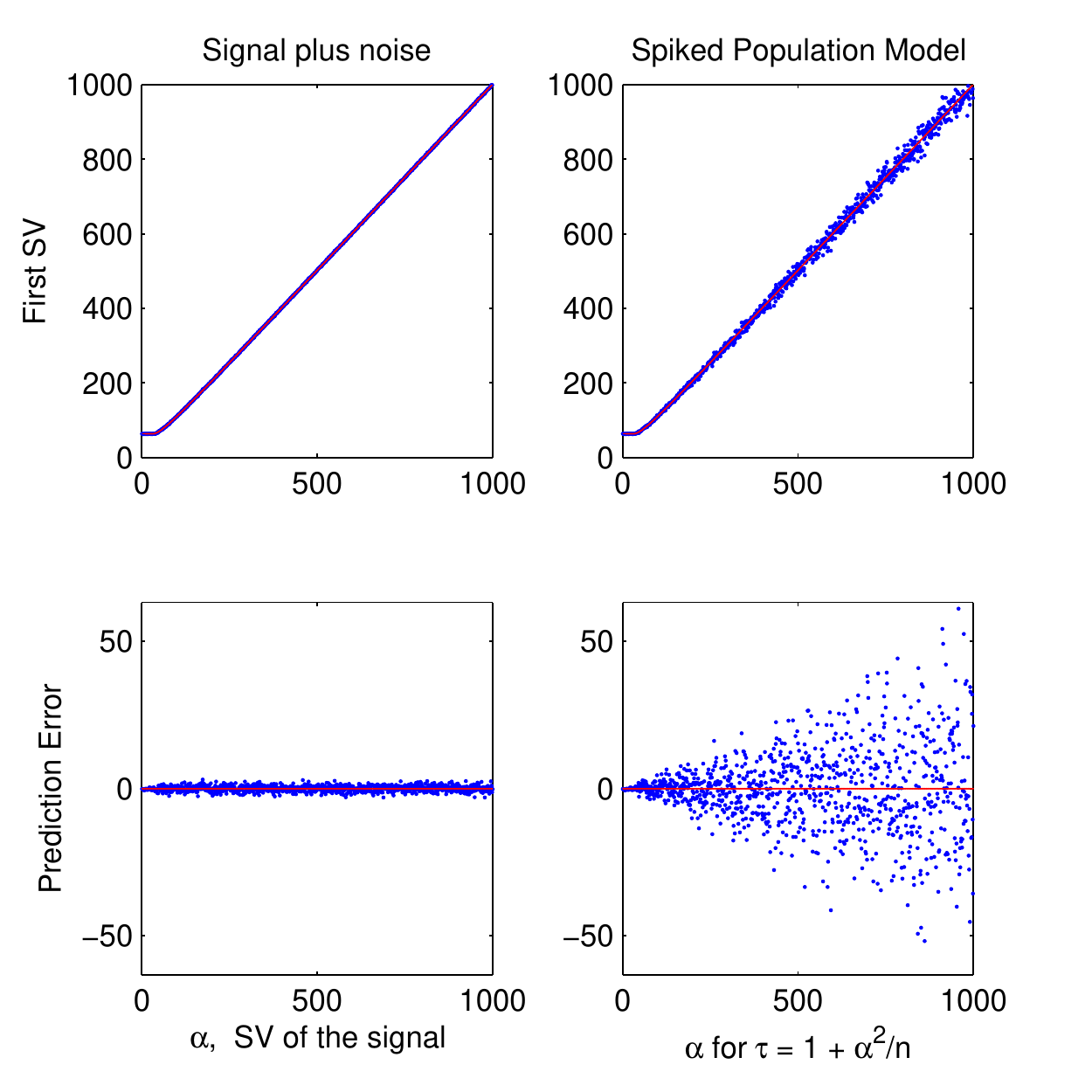}%
\caption[Largest singular values of the matched matrices from reconstruction and SPM.]{Largest singular values of the matched rank one matrices from reconstruction and spiked population models against the signal strength $\alpha$. See complete description and discussion in Section \ref{sec:SPMtest}.
\labelA{fig:4plots}
}
\end{center}
\end{figure}

Figure \ref{fig:4plots} illustrated the findings for matrix reconstruction model on the left two plots and for spiked population models on the right two. The top plots show the largest singular value (of $Y$ on left plot and $X$ on right) against $\alpha$ as blue dots and the predicted values as a red line. The bottom plots show the difference between the realized first singular values and the prediction.

In clear from Figure \ref{fig:4plots} that the prediction for matrix reconstruction model does not just work well, it actually works better than the original prediction for the spiked population model. 
This result can be explain by the fact that the signal $A$ is non-random in matrix reconstruction model while $X_1$ in the spiked population model is random.
Note that even though $n^{-1/2} X_1$ is random, under spiked population model, its non-zero singular values
converge almost surely to non-random limits as $n\to\infty$. 
In the matrix reconstruction model we remove the randomness of $X_1$ by replacing it by its non-random asymptotic version $A$.
This explains the better performance of the prediction for matrix reconstruction model illustrated on Figure \ref{fig:4plots}.

}

\section*{Acknowledgments}

This work was supported, in part, by grants from US EPA (RD832720 and
RD833825) and grant from NSF (DMS-0907177).

The authors would like to thank Florentina Bunea and Marten Wegkamp for
helpful discussions.

\section*{Appendix}%
\subsection{Cumulative Distribution Function for Variance \\Estimation%
\labelA{sec:cdf}}


The  cumulative density function $F(\cdot)$ is calculated as the integral of $f_{n^{-1/2}W}(s)$.
For $c=1$ ($a=0, b=4$) it is a common integral 
\begin{equation*}
	F(x)
 \,=\, 
\int_{\sqrt{a}}^x f(s) ds 
\,=\,
\frac{1}{\pi} \int_{0}^x \sqrt{b-s^2} ds 
\,=\,
\frac{1}{2\pi}
\left(
x \sqrt{4-x^2} + 4 \arcsin \frac{x}{2}
\right)
\end{equation*}
For $c\neq 1$ the calculations are more complicated. First we perform the change of variables $t = s^2$, which yields
\begin{equation*}
\begin{split}
F(x)
 \, & =\, 
\int_{\sqrt{a}}^x f(s) ds 
\,=\,
C \int_{\sqrt{a}}^x s^{-2} \sqrt{(b-s^2)(s^2-a)} ds^2 
\\ & = \,
C \int_{a}^{x^2} t^{-1} \sqrt{(b-t)(t-a)}dt
,
\end{split}
\end{equation*}
where $C = 1/(2\pi (c \wedge 1))$.

Next we perform a change of variables $y = t - [a+b]/2$ to make the expression in the square root look like $h^2-x^2$, giving  
\begin{equation*}
\begin{split}
F(x)
 \, & =\, 
C \int_{-[b-a]/2}^{x^2-[a+b]/2}
\frac{\sqrt{([b-a]/2 - y)(y + [b-a]/2)}}{y +[a+b]/2} dy 
\\ & =\,
C \int_{-2\sqrt{c}}^{x^2-(1+c)}
\frac{\sqrt{4c - y^2}}{y + 1 + c} dy 
,
\end{split}
\end{equation*}
The second equality above uses the fact that $a+b = 2(1+c)$ and $b-a = 4\sqrt{c}$. 
The simple change of variables $y = 2\sqrt{c}z$ is performed next to make the numerator $\sqrt{1-z^2}$:
\[
F(x)
 \,=\, 
\frac{\sqrt{c}}{\pi (c \wedge 1)} \int_{-1}^{[x^2-(1+c)]/2\sqrt{c}}
\frac{\sqrt{1 - z^2}}{z + (1 + c)/2\sqrt{c}} dz 
\]
Next, the formula
\begin{equation*}
\begin{split}
\int \frac{\sqrt{1-z^2}}{z+q} dw
\, = &\, 
\sqrt{1-z^2} + q \arcsin(z)
\\ & 
- \sqrt{q^2-1} \arctan 
\left[
\frac{q z+1}{\sqrt{(q^2-1)(1-z^2)}} 
\right]
\end{split}
\end{equation*}
is applied to find the closed form of $F(x)$ by substituting $z = [x^2-(1+c)]/2\sqrt{c}$ and $q = (1 + c)/2\sqrt{c}$.
The final expression above can be simplified as
$\sqrt{q^2-1} = \sqrt{[(1 + c)/2\sqrt{c}]^2 - 1} = |1-c|/2\sqrt{c}$.

\subsection{Limit Theorems for Asymptotic Matrix \\Reconstruction Problem%
\labelA{sec:proofs}}

Propositions \ref{prop:SPM} and \ref{prop:Paul} in Section \ref{sec:asy} provide an
asymptotic connection
between the eigenvalues and eigenvectors of the signal 
matrix $A$ and those of the observed matrix $Y$.
Each proposition is derived from recent work in random matrix theory 
on spiked population models.
Spiked population models were introduced by \cite{johnstone2001distribution}.

\subsubsection{The Spiked Population Model}

The spiked population model is formally defined as follows.
Let $r \geq 1$ and constants $\tau_1 \geq \cdots \geq \tau_r > 1$
be given, and for $n \geq 1$ let integers $m = m(n)$ be defined 
in such a way that 
\begin{equation}
\label{eq:aspectspm}
\frac{m}{n} \, \to \, c \, > \, 0 \ \mbox{ as } \ n \to \infty .
\end{equation}
For each $n$ let 
\[
T = \mbox{diag}(\tau_1,\ldots,\tau_r, 1, \ldots, 1)
\]
be an $m \times m$ diagonal matrix (with $m = m(n)$), and let
$X$ be an $m \times n$ matrix 
with independent $N_m(0,T)$ columns.  Let $\widehat T = n^{-1} X X'$
be the sample covariance matrix of $X$.


The matrix $X$ appearing in the spiked population model may be
decomposed as a sum of matrices that parallel those in the
matrix reconstruction problem.  
In particular, $X$ can be represented as a sum 
\begin{equation}
\labelA{eq:split}
X 
\, = \, 
X_1 + Z,
\end{equation}
where $X_1$ has independent $N_m(0,T-I)$ columns, 
$Z$ has independent $N(0,1)$ entries, and $X_1$ and $Z$ are independent. 
It follows from the definition of $T$ that
\[
(T-I) 
\, = \, 
\mbox{diag}(\tau_1-1,\ldots,\tau_r-1,0,...,0) ,
\]
and therefore the entries in rows $r+1,\ldots,m$ of $X_1$ are equal to zero.
Thus, the sample covariance matrix 
$\widehat T_1 = n^{-1} X_1 X_1'$ of $X_1$ 
has the simple block form
\[
\vspace*{.16in}
\widehat T_1 
\, = \, 
\left[\begin{array}{c|c} \widehat T_{11} & 0 \\\hline 0 & 0\end{array}\right]
\vspace*{-.22in}
\] 
where $\widehat T_{11}$ is an $r \times r$ matrix equal to 
the sample covariance of the first $r$ rows of $X_1$.  
It is clear from the block structure that the first 
$r$ eigenvalues of $\widehat T_1$ are equal to the
eigenvalues of $\widehat T_{11}$, and that the remaining 
$(m - r)$ eigenvalues of $\widehat T_1$ are equal to zero.  
The size of $\widehat T_{11}$ is fixed, and therefore as $n$ tends to infinity,
its entries converge in probability to those of 
$\mbox{diag}(\tau_1-1,\ldots,\tau_r-1)$. In particular,
\begin{equation}
\labelA{eq:x1matconv}
\big\| \frac{1}{n} X_1 X_1' - (T-I) \big\|_F^2 \,\topr\, 0.
\end{equation}
  Consequently, 
for each $j=1,\ldots,r$,
as $n$ tends to infinity
\begin{equation}
\labelA{eq:svconv}
\lambda_j^2(n^{-1/2}X_1) 
\, = \,  
\lambda_j(\widehat T_1) 
\, = \, 
\lambda_j(\widehat T_{11}) 
\,\topr\,
\tau_j-1
\end{equation}
and 
\begin{equation}
\labelA{eq:sveconv1}
\big\langle u_j(\widehat T_{11}), e_j \big\rangle^2 \, \topr \, 1 ,
\end{equation}
where $e_j$ is the $j$-th canonical basis element in $\reals^r$.
An easy argument shows that $u_j(n^{-1/2}X_1) = u_j(\widehat T_1)$,
and it then follows from (\ref{eq:sveconv1}) that 
\begin{equation}
\labelA{eq:sveconv2}
\big\langle u_j(n^{-1/2}X_1), e_j \big\rangle^2 \, \topr \, 1 ,
\end{equation}
where $e_j$ is the $j$-th canonical basis element in $\reals^m$.

\subsubsection{Proof of Proposition \ref{prop:SPM}}

Proposition \ref{prop:SPM} is derived from existing results 
on the limiting singular values 
of $\widehat T$ in the spiked population model.  These results are
summarized in the following theorem, which is a combination of
Theorems 1.1, 1.2 and 1.3 in \cite{baik2006els}.


\begin{thmref}
\labelA{thm:BS}
If $\widehat T$ is derived from the spiked population model with parameters
$\tau_1, \ldots, \tau_r > 1$, then for $j=1,\ldots,r$, as $n \to \infty$
\[ 
\lambda_j(\widehat T) 
\, \topr \,
\left\{
\begin{array}{lll}
\tau_j + c \frac{\tau_j}{\tau_j-1} 		& \mbox{ if } &  \tau_j > 1 + \sqrt{c}\\
(1 + \sqrt{c})^2 					& \mbox{ if } & 1 < \tau_j \leq 1 + \sqrt{c}  \\
\end{array}
\right.
\]
The remaining sample eigenvalues $\lambda_{r+1}(\widehat T),\ldots,$ 
$\lambda_{m \wedge n}(\widehat T)$
are associated with the unit eigenvalues of $T$, their empirical
distribution converges weakly to the Mar{\v{c}}enko-Pastur distribution.  
\end{thmref}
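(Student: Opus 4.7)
The plan is to split the conclusion into two parts---convergence of the bulk to the Mar{\v{c}}enko-Pastur distribution and the explicit limits of the top $r$ eigenvalues---and to analyze each via Stieltjes transform techniques together with the rank-$r$ structure of the spike.

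For the bulk, I would observe that $T = I + \Delta$, where $\Delta = \mbox{diag}(\tau_1-1,\ldots,\tau_r-1,0,\ldots,0)$ has rank $r$ fixed as $n \to \infty$. Writing $\widehat T = n^{-1} T^{1/2} Z Z' T^{1/2}$ for $Z$ an $m \times n$ matrix of iid $N(0,1)$ entries, a standard interlacing argument shows that finite-rank perturbations do not affect the limiting empirical spectral distribution. Hence the empirical distribution of eigenvalues of $\widehat T$ shares the same weak limit as that of $n^{-1} Z Z'$, which is the classical Mar{\v{c}}enko-Pastur law, accounting for the statement about $\lambda_{r+1}(\widehat T),\ldots,\lambda_{m \wedge n}(\widehat T)$.

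For the outlier eigenvalues I would use a resolvent/Schur-complement approach. Let $m(z)$ denote the Stieltjes transform of the Mar{\v{c}}enko-Pastur limit and $\underline m(z) = -(1-c)/z + c\, m(z)$ its companion. Partition the coordinate indices into the $r$ spiked rows and the remaining $m-r$ unspiked rows, and apply the Schur complement formula to $\widehat T - \lambda I$. An outlier eigenvalue $\lambda$ outside the bulk support then corresponds to a zero of an $r \times r$ determinant whose entries are quadratic forms in the resolvent of the unspiked block. Gaussian concentration of quadratic forms (Hanson-Wright-type estimates) shows that in the limit these entries concentrate at $(1 + \tau_j\,\underline m(\lambda))\,\delta_{jk}$, so the limiting outlier equation reduces to $1 + \tau_j\,\underline m(\lambda) = 0$. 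Inverting this using the explicit form of $\underline m(\lambda)$ outside the support yields the closed-form limit $\lambda = \tau_j\bigl(1 + c/(\tau_j - 1)\bigr)$.

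The phase transition at $\tau_j = 1 + \sqrt c$ appears because $\underline m$ is monotone on the complement of the bulk support, and the equation $1 + \tau_j\,\underline m(\lambda) = 0$ admits a solution with $\lambda > (1+\sqrt c)^2$ exactly when $\tau_j > 1 + \sqrt c$; for $1 < \tau_j \leq 1 + \sqrt c$ the putative solution falls inside the Mar{\v{c}}enko-Pastur support and the corresponding sample eigenvalue is pulled back to the soft edge $(1+\sqrt c)^2$. I expect the main technical obstacles to be (i) uniform control of the random Stieltjes transform on a contour surrounding the outlier regime, which really requires a local Mar{\v{c}}enko-Pastur law rather than just weak convergence, and (ii) establishing convergence to the edge in the subcritical regime $\tau_j \leq 1 + \sqrt c$, which is considerably more delicate than the supercritical case and typically hinges on Tracy-Widom-type tightness estimates for the largest bulk eigenvalue.
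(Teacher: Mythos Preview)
The paper does not prove this statement. Theorem~\ref{thm:BS} is quoted as an external result, explicitly attributed to Theorems~1.1, 1.2, and 1.3 of \cite{baik2006els}; it is then invoked as a black box in the proof of Proposition~\ref{prop:SPM}. There is therefore no proof in the paper to compare your proposal against.

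As to the soundness of your sketch: the strategy you outline is one of the standard routes to results of this type. The interlacing argument for the bulk is correct and essentially immediate. Your outlier equation $1 + \tau_j\,\underline m(\lambda) = 0$ gives the right answer---substituting $\underline m = -1/\tau_j$ into the self-consistent relation $z = -1/\underline m + c/(1+\underline m)$ recovers $\lambda = \tau_j + c\tau_j/(\tau_j-1)$. Your identification of the two hard technical steps, namely uniform control of the resolvent outside the bulk and the edge-sticking statement in the subcritical regime $\tau_j \le 1+\sqrt c$, is accurate; the second of these is genuinely the more delicate half of the theorem and does not follow from the determinantal equation alone. The original Baik--Silverstein argument also proceeds via Stieltjes transforms, organized somewhat differently from the Schur-complement reduction you describe, but the overall shape is recognizably the same. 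In short, your plan is reasonable, but the comparison you were asked to make is vacuous here.
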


We also require the following inequality of \cite{mirsky1960symmetric}.

\begin{thmref}
\labelA{thm:Mirsky}  
If $B$ and $C$ are $m \times n$ matrices then
$\sum_{j=1}^{m \wedge n} [ \lambda_j(C) - \lambda_j(B) ]^2 \, \leq \, \| C-B \|_F^2$.
\end{thmref}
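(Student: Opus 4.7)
The plan is to reduce the statement to the classical Hoffman--Wielandt inequality for Hermitian (symmetric) matrices by means of the Jordan--Wielandt dilation, which encodes a rectangular matrix as a symmetric matrix of twice the size whose eigenvalues are exactly the $\pm\sigma_j$ together with zeros.

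Concretely, for any $m \times n$ matrix $X$ I would form the $(m+n) \times (m+n)$ symmetric block matrix
\[
\tilde X \, = \, \begin{pmatrix} 0 & X \\ X' & 0 \end{pmatrix}.
\]
A standard SVD computation shows that the nonzero eigenvalues of $\tilde X$ are precisely $\pm \lambda_j(X)$ for $j = 1, \ldots, m \wedge n$, with the remaining $|m - n|$ eigenvalues equal to zero; moreover $\|\tilde X\|_F^2 = 2 \|X\|_F^2$. Applying this construction to both $B$ and $C$, I would then invoke the Hoffman--Wielandt inequality for symmetric matrices, which states that if the eigenvalues of $\tilde B$ and $\tilde C$ are listed in the same (say, decreasing) order, then
\[
\sum_{i} \bigl(\lambda_i(\tilde C) - \lambda_i(\tilde B)\bigr)^2 \, \leq \, \|\tilde C - \tilde B\|_F^2.
\]

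Because the eigenvalues of $\tilde B$ in decreasing order are $\lambda_1(B) \geq \cdots \geq \lambda_{m \wedge n}(B) \geq 0 \geq \cdots \geq 0 \geq -\lambda_{m \wedge n}(B) \geq \cdots \geq -\lambda_1(B)$, and similarly for $\tilde C$, the sorted matching pairs positive eigenvalues with positive and negative with negative. Thus the left-hand side contributes $(\lambda_j(C) - \lambda_j(B))^2$ twice for each $j \leq m \wedge n$ (once from the positive half, once from the negative half), while the zero eigenvalues in the middle contribute nothing. Combined with the identity $\|\tilde C - \tilde B\|_F^2 = 2 \|C - B\|_F^2$, we obtain
\[
2 \sum_{j=1}^{m \wedge n} \bigl(\lambda_j(C) - \lambda_j(B)\bigr)^2 \, \leq \, 2 \|C - B\|_F^2,
\]
and dividing by two yields the claim.

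The only subtle point, and the main thing to verify carefully, is that the sorted pairing in Hoffman--Wielandt really pairs $+\lambda_j(C)$ with $+\lambda_j(B)$ (and $-\lambda_j(C)$ with $-\lambda_j(B)$) rather than mixing signs; this is immediate once both eigenvalue lists are written in decreasing order, since both lists are symmetric about zero and have identical sign patterns in the sorted enumeration. If I preferred to avoid citing Hoffman--Wielandt as a black box, an alternative route would be a direct variational argument using the Ky Fan characterization $\sum_{j=1}^k \lambda_j(X) = \max_{U,V} \mathrm{tr}(U' X V)$ over partial isometries, combined with a doubly stochastic majorization argument, but the dilation-plus-Hoffman--Wielandt route is by far the cleanest.
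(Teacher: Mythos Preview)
Your argument via the Jordan--Wielandt dilation and the Hoffman--Wielandt inequality is correct and is in fact the standard modern proof of Mirsky's inequality. The eigenvalue counting and sign-pairing step you flagged as ``the only subtle point'' is handled correctly: both $\tilde B$ and $\tilde C$ are $(m+n)\times(m+n)$ with exactly $|m-n|$ zero eigenvalues, so when both spectra are sorted in decreasing order the positive, zero, and negative blocks line up index by index.

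However, there is nothing to compare your proof against: the paper does not prove this statement. It is stated as a reference result (Theorem~B in the paper's lettered numbering for cited theorems), attributed to \cite{mirsky1960symmetric}, and then invoked as a black box in the proof of Proposition~\ref{prop:SPM}. So your write-up supplies a proof where the paper simply cites one; that is fine, and the dilation route you chose is the cleanest way to do it.
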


\vskip.1in

\begin{proof}[Proof of Proposition \ref{prop:SPM}]

Fix $n \geq 1$ and let $Y$ follow the asymptotic reconstruction 
model (\ref{eq:asmodel}), 
where the signal matrix $A$ has fixed rank $r$ and non-zero singular values $\lambda_1(A), \ldots, \lambda_r(A)$.  
Based on orthogonal invariance of the matrix reconstruction problem, without loss of generality, we will assume that the
signal matrix $A = \mbox{diag}(\lambda_1(A), \ldots, \lambda_r(A),$ $0, \ldots, 0)$.

We begin by considering a spiked population model whose 
parameters match those of the matrix reconstruction model.
Let $X$ have the same dimensions as $Y$ and be derived 
from a spiked population model with covariance matrix $T$ 
having $r$ non-unit eigenvalues
\begin{equation}
\label{alpha}
\tau_j = \lambda_j^2(A) + 1, \ \ j = 1,\ldots, r.
\end{equation}
As noted above, we may represent $X$ as $X = X_1 + Z$, 
where $X_1$ has independent $N(0,T-I)$ columns, $Z$ has independent
$N(0.1)$ entries and $X_1$ is independent of $Z$. Recall that the limiting relations
(\ref{eq:x1matconv})-(\ref{eq:sveconv2}) hold for this representation.

The matrix reconstruction problem and spiked population model may
be coupled in a natural way.
Let random orthogonal matrices $U_1$ and $V_1$ be defined for each 
sample point in such a way that $U_1 D_1 V_1'$ is the SVD of $X_1$. 
By construction, the matrices $U_1, V_1$ depend only on $X_1$, 
and are therefore independent of $Z$.  Consequently $U_1' Z V_1$ has the 
same distribution as $Z$. 
If we define $\tilde W = U_1' Z V_1$, then $\tilde Y = A + n^{-1/2}\tilde W$ 
has the same distribution as the observed matrix $Y$ in the 
matrix reconstruction problem.

We apply Mirsky's theorem with $B = \tilde Y$ and 
$C = n^{-1/2} U_1'XV_1$ in order to bound the difference 
between the singular values of $\tilde Y$ and those
of $n^{-1/2} X$:
\begin{equation*}
\begin{split}
\sum_{j=1}^{m \wedge n} \big[ \lambda_j(n^{-1/2}X) - \lambda_j(\tilde Y) \big]^2 
\, & \leq \, 
\big\|n^{-1/2}U_1' X V_1 - \tilde Y  \big\|_F^2
\\ & = \,
\big\|(n^{-1/2}U_1' X_1 V_1 - A) + n^{-1/2}(U_1' Z V_1  - \tilde W)    \big\|_F^2
\\[0.07in] & = \,
\big\|n^{-1/2}U_1' X_1 V_1 - A \big\|_F^2
\\[0.03in] & = \,
\sum_{j=1}^{m \wedge n} \big[ \lambda_j(n^{-1/2}U_1'X_1V_1) - \lambda_j(A) \big]^2
\\ & = \,
\sum_{j=1}^{m \wedge n} \big[ \lambda_j(n^{-1/2} X_1) - \lambda_j(A) \big]^2 
.
\end{split}
\end{equation*}
The first inequality above follows from Mirsky's theorem and the fact that
the singular values of $n^{-1/2}X$ and $n^{-1/2}U_1' X V_1$ are 
the same, even though $U_1$ and $V_1$ may not be independent of $X$.  
The next two equalities follow by expanding $X$ and $\tilde Y$,
and the fact that $\tilde W = U_1' Z V_1$. 
The third equality is a consequence of the fact that both $U_1'X_1V_1$ and $A$
are diagonal, and the final equality follows from the equality of the singular values
of $X_1$ and $U_1' X_1 V_1$.
In conjunction with (\ref{eq:svconv}) and (\ref{alpha}), the last display 
implies that
\[
\sum_j \big[ \lambda_j(n^{-1/2}X) - \lambda_j(\tilde Y) \big]^2 \topr 0 .
\]
Thus the distributional and limit results for the eigenvalues of 
$\widehat T = n^{-1} XX'$ hold also for the eigenvalues of 
$\tilde Y\tilde Y'$, and therefore for $YY'$ as well. 
The relation $\lambda_j(Y) = \sqrt{\lambda_j(YY')}$ completes
the proof.
\end{proof}

\subsubsection{Proof of Proposition \ref{prop:Paul}}

Proposition \ref{prop:Paul} may be derived from existing results 
on the limiting singular vectors of the sample covariance 
$\widehat T$ in the spiked population model. These results are summarized in 
Theorem \ref{thm:Paul} below. The result was first established for Gaussian
models and aspect ratios $0<c<1$ by \cite{paul2007asymptotics}.  
\cite{nadler2008finite} extended Paul's results to $c>0$.  Recently 
\cite{lee2010convergence} further extended the theorem to $c \geq 0$
and non-Gaussian models.

\begin{thmref} 
\labelA{thm:Paul}
If $\widehat T$ is derived from the spiked population model with 
distinct parameters
$\tau_1 > \cdots > \tau_r > 1$, then for $1 \leq j \leq r$,
\[
\big\langle u_j(\widehat T),u_j(T) \big\rangle^2
\, \topr \,
\left\{
\begin{array}{lll}
\Big(1-\frac{c}{(\tau_j-1)^2}\Big) \,/\, \Big( 1 + \frac{c}{\tau_j-1}\Big)
								& \mbox{ if } &  \tau_j > 1 + \sqrt{c}\\
0 								& \mbox{ if } &  1< \tau_j \leq 1 + \sqrt{c}\\
\end{array}
\right.
\]
Moreover, for $\tau_j > 1 + \sqrt{c}$ and $k\neq j$ such that $1 \leq k \leq r$ we have
\[
	\big\langle u_j(\widehat T),u_k(T) \big\rangle^2 \topr 0.
\]
\end{thmref}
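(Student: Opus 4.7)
The plan is to adapt Paul's block-decomposition strategy, exploiting the fact that $T$ is a finite-rank perturbation of the identity. Writing $X = T^{1/2} Z$ with $Z$ having i.i.d.\ $N(0,1)$ entries, I partition the rows of $X$ into blocks $X^{(1)} \in \reals^{r \times n}$ and $X^{(2)} \in \reals^{(m-r) \times n}$ corresponding to the non-unit and unit eigenvalues of $T$. Then $u_j(T) = e_j$ for $j \leq r$, and writing $u = u_j(\widehat T) = (u^{(1)}, u^{(2)})$ in matching block form, the quantity of interest $\langle u_j(\widehat T), u_k(T) \rangle$ is simply the $k$-th coordinate of $u^{(1)}$.

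Next I would expand the eigenvalue equation $\widehat T u = \lambda u$ (with $\lambda = \lambda_j(\widehat T)$) in block form and apply the Schur complement. Eliminating $u^{(2)} = (\lambda I - n^{-1} X^{(2)} X^{(2)\prime})^{-1} n^{-1} X^{(2)} X^{(1)\prime} u^{(1)}$ reduces everything to the $r \times r$ system
\begin{equation*}
M(\lambda)\, u^{(1)} \,=\, \lambda\, u^{(1)}, \qquad
M(\lambda) \,=\, \tfrac{1}{n} X^{(1)} X^{(1)\prime} + \tfrac{1}{n^2} X^{(1)} X^{(2)\prime} \bigl(\lambda I - \tfrac{1}{n} X^{(2)} X^{(2)\prime}\bigr)^{-1} X^{(2)} X^{(1)\prime}.
\end{equation*}
Because $X^{(1)}$ is Gaussian with covariance $\mbox{diag}(\tau_1,\ldots,\tau_r)$ and is independent of $X^{(2)}$, standard quadratic-form concentration combined with the convergence of the Stieltjes transform of $n^{-1} X^{(2)} X^{(2)\prime}$ to that of the Mar\v{c}enko--Pastur law shows that $M(\lambda)$ concentrates on a deterministic diagonal matrix. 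Evaluating at the limiting value of $\lambda$ supplied by Theorem~\ref{thm:BS}, the $(j,j)$ entry is the one that matches $\lambda$, forcing $u^{(1)}$ to align with $e_j$. This already yields the orthogonality claim $\langle u_j(\widehat T), u_k(T) \rangle \topr 0$ for $k \neq j$.

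To extract the magnitude $\langle u_j(\widehat T), e_j \rangle^2$, I would combine the normalization $\|u^{(1)}\|^2 + \|u^{(2)}\|^2 = 1$ with
\begin{equation*}
\|u^{(2)}\|^2 \,=\, u^{(1)\prime}\, \tfrac{1}{n^2} X^{(1)} X^{(2)\prime} \bigl(\lambda I - \tfrac{1}{n} X^{(2)} X^{(2)\prime}\bigr)^{-2} X^{(2)} X^{(1)\prime}\, u^{(1)}.
\end{equation*}
A second round of quadratic-form concentration identifies $\|u^{(2)}\|^2 / \|u^{(1)}\|^2$ with a deterministic expression involving the derivative of the Mar\v{c}enko--Pastur Stieltjes transform at the limiting $\lambda = \tau_j + c\tau_j/(\tau_j-1)$; the algebra collapses to the ratio $c/(\tau_j-1)$, and the stated limit for $\langle u_j(\widehat T), e_j \rangle^2$ follows by rearranging $\|u^{(1)}\|^2 = (1 + \|u^{(2)}\|^2/\|u^{(1)}\|^2)^{-1}$ together with the factor $1 - c/(\tau_j-1)^2$ coming from $M(\lambda)$.

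The main obstacle will be rigorous control of the two quadratic forms above: one needs a quantitative Mar\v{c}enko--Pastur law that gives uniform convergence of the resolvent and its square for arguments outside the bulk support, together with Hanson--Wright-type concentration so that $o(1)$ fluctuations are not amplified by the $n \times n$ resolvent. The subcritical case $1 < \tau_j \leq 1 + \sqrt{c}$ requires separate handling, since there the limiting $\lambda$ sits at the edge of the bulk, the resolvent becomes singular, and one must instead argue delocalization of the associated eigenvector to conclude that $\langle u_j(\widehat T), e_j \rangle^2 \topr 0$.
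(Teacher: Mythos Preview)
The paper does not prove this statement. Theorem~\ref{thm:Paul} is a reference result (note the \texttt{thmref} environment and alphabetical label): it is quoted from Paul (2007), with extensions by Nadler (2008) and Lee (2010), and used as a black box in the proof of Proposition~\ref{prop:Paul}. The paper's only remark toward a proof is that the orthogonality claim in the second display, while not explicitly stated by Paul, ``follows immediately from the central limit theorem for eigenvectors [Theorem~5, Paul 2007].''

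Your sketch is essentially an outline of Paul's original argument---block partition according to the spiked versus non-spiked coordinates, Schur-complement reduction to an $r\times r$ system, concentration of quadratic forms against the resolvent of the noise block via the Mar\v{c}enko--Pastur Stieltjes transform, and recovery of the norm split from the derivative of the resolvent. That is the right approach if one wants to reprove the result, and you have correctly flagged the genuine technical work (uniform resolvent control outside the bulk, Hanson--Wright concentration, separate treatment at the edge for the subcritical case). One small muddle: your final accounting, with the ratio $\|u^{(2)}\|^2/\|u^{(1)}\|^2$ collapsing to $c/(\tau_j-1)$ and a separate factor $1-c/(\tau_j-1)^2$ ``coming from $M(\lambda)$,'' does not quite parse---in Paul's computation the full limit $(1-c/(\tau_j-1)^2)/(1+c/(\tau_j-1))$ emerges in one piece from evaluating the resolvent and its derivative at the limiting eigenvalue, not as a product of two independently derived factors. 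But since the paper offers no proof to compare against, there is nothing further to contrast.
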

Although the last result is not explicitly stated in \cite{paul2007asymptotics}, it follows immediately from the central limit theorem for eigenvectors \citep[Theorem 5,][]{paul2007asymptotics}.

We also require the following result, which is a special case of an inequality of Wedin \citep{wedin1972perturbation, stewart1991perturbation}.

\begin{thmref}
\labelA{thm:Wedin}  
Let $B$ and $C$ be $m \times n$ matrices and let $1 \leq j \leq m \wedge n$.
If the $j$-th singular value of $C$ is
separated from the singular values of $B$ and bounded away from zero,
in the sense that for some $\delta > 0$
\[
\min_{k\neq j}\big|\lambda_j(C) - \lambda_k(B)\big| \,>\, \delta
\qquad
\mbox{ and }
\qquad
\lambda_j(C) \,>\, \delta
\]
then 
\[
\big\langle u_j(B), u_j(C) \big\rangle^2 \, + \, \big\langle v_j(B), v_j(C) \big\rangle^2 
\,\geq\,
2 - \frac{2\| B-C \|_F^2}{\delta^2} .
\]
\end{thmref}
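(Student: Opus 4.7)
The plan is to give a direct proof along the lines of Wedin's original residual argument. The idea is to produce two residual vectors whose Frobenius norms are controlled by $\|B-C\|_F$ and which, once expanded in the singular basis of $B$, encode lower bounds on $\alpha_j := \langle u_j(B), u_j(C)\rangle$ and $\beta_j := \langle v_j(B), v_j(C) \rangle$.

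Set $\hat u = u_j(C)$, $\hat v = v_j(C)$, $\hat\lambda = \lambda_j(C)$, and use the SVD identities $C\hat v = \hat\lambda \hat u$ and $C'\hat u = \hat\lambda\hat v$ to define
\[
R_1 = (C-B)\hat v = \hat\lambda \hat u - B\hat v, \qquad R_2 = (C-B)'\hat u = \hat\lambda \hat v - B'\hat u.
\]
Because $\|\hat u\| = \|\hat v\| = 1$ and the operator norm is dominated by the Frobenius norm, one immediately gets $\|R_1\|^2 + \|R_2\|^2 \leq 2\|B-C\|_F^2$, which provides the ``perturbation'' side of the final bound.

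Next I would expand $\hat u = \sum_k \alpha_k u_k(B) + u^\perp$ and $\hat v = \sum_k \beta_k v_k(B) + v^\perp$, with $u^\perp$ and $v^\perp$ orthogonal to the column and row spaces of $B$. A short calculation using $Bv_k(B) = \lambda_k(B)u_k(B)$ yields
\[
\|R_1\|^2 + \|R_2\|^2 = \sum_k \bigl[(\hat\lambda\alpha_k - \lambda_k(B)\beta_k)^2 + (\hat\lambda\beta_k - \lambda_k(B)\alpha_k)^2\bigr] + \hat\lambda^2\bigl(\|u^\perp\|^2 + \|v^\perp\|^2\bigr),
\]
and the algebraic identity
\[
(\hat\lambda\alpha - \lambda\beta)^2 + (\hat\lambda\beta - \lambda\alpha)^2 = \tfrac12(\hat\lambda - \lambda)^2(\alpha+\beta)^2 + \tfrac12(\hat\lambda + \lambda)^2(\alpha-\beta)^2,
\]
applied coordinatewise, lets the two gap hypotheses enter. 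For every $k \neq j$ both $(\hat\lambda - \lambda_k(B))^2 > \delta^2$ (from the separation condition) and $(\hat\lambda + \lambda_k(B))^2 \geq \hat\lambda^2 > \delta^2$ (since $\lambda_k(B) \geq 0$ and $\hat\lambda > \delta$), so the $k$-th summand is at least $\delta^2(\alpha_k^2 + \beta_k^2)$; similarly $\hat\lambda^2 > \delta^2$ bounds the orthogonal remainders from below by $\delta^2(\|u^\perp\|^2 + \|v^\perp\|^2)$. Summing and using $\sum_k \alpha_k^2 + \|u^\perp\|^2 = \sum_k \beta_k^2 + \|v^\perp\|^2 = 1$ produces $\|R_1\|^2 + \|R_2\|^2 \geq \delta^2(2 - \alpha_j^2 - \beta_j^2)$, and combining with the perturbation bound finishes the argument.

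The main obstacle is the $k = j$ term, where no lower bound on $|\hat\lambda - \lambda_j(B)|$ is available, so the $\tfrac12(\hat\lambda - \lambda_j(B))^2(\alpha_j + \beta_j)^2$ piece has to be discarded entirely. The saving grace is that the surviving companion $\tfrac12(\hat\lambda + \lambda_j(B))^2(\alpha_j - \beta_j)^2$ is nonnegative and may simply be dropped as well; the constant $\tfrac{2}{\delta^2}$ appearing in the statement is loose enough to absorb this loss.
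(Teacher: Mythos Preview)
Your argument is correct and complete. The residual identities for $R_1$ and $R_2$ are right, the algebraic identity
\[
(\hat\lambda\alpha - \lambda\beta)^2 + (\hat\lambda\beta - \lambda\alpha)^2 = \tfrac12(\hat\lambda - \lambda)^2(\alpha+\beta)^2 + \tfrac12(\hat\lambda + \lambda)^2(\alpha-\beta)^2
\]
checks out, and the two hypotheses enter exactly as you say: separation handles the $(\hat\lambda-\lambda_k(B))^2$ factor for $k\neq j$, while $\hat\lambda>\delta$ handles both the $(\hat\lambda+\lambda_k(B))^2$ factor and the orthogonal remainders. Dropping the entire $k=j$ summand is harmless since it is nonnegative, so the ``obstacle'' you flag is not really one.

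As for comparison with the paper: there is nothing to compare. The paper does not prove this statement; it is quoted as a reference result (a special case of Wedin's inequality, with a citation to Wedin 1972 and Stewart 1991) and used as a black box in the proof of Proposition~\ref{prop:Paul}. Your write-up supplies the self-contained derivation the paper omits, and it follows the classical Wedin residual approach.
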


\begin{proof}[Proof of Proposition \ref{prop:Paul}:] 

Fix $n \geq 1$ and let $Y$ follow the asymptotic reconstruction 
model (\ref{eq:asmodel}), where the signal matrix $A$ 
has fixed rank $r$ and non-zero singular values $\lambda_1(A), \ldots, \lambda_r(A)$.  
Assume without loss of generality that $A = \mbox{diag}(\lambda_1(A), \ldots, \lambda_r(A),$ $0, \ldots, 0)$.

We consider a spiked population model whose parameters match those of the matrix reconstruction problem and couple it with the matrix reconstruction model exactly as in the proof of Proposition \ref{prop:SPM}.
In particular, the quantities $\tau_j, T, X, X_1, Z, U_1, V_1, \tilde W,$ and $\tilde Y$ are as in the proof of Proposition \ref{prop:SPM} and the preceding discussion.

Fix an index $j$ such that $\lambda_j(A) > \sqrt[4]{c}$ and thus $\tau_j > 1 + \sqrt{c}$.
We apply Wedin's theorem with $B = \tilde Y$ and $C = n^{-1/2}U_1' X V_1$.
There is $\delta>0$ such that both conditions of Wedin's theorem are satisfied for the given $j$ with probability converging to 1 as $n \to \infty$. The precise choice of $\delta$ is presented at the end of this proof. It follows from Wedin's theorem and inequality $\langle v_j(B), v_j(C) \rangle^2 \leq 1$ that 
\[
\big\langle u_j(\tilde Y), u_j(n^{-1/2}U_1' X V_1) \big\rangle^2
\,=\,
\big\langle u_j(B), u_j(C) \big\rangle^2 
\,\geq\,
1-\frac{2\| B-C \|_F^2}{\delta^2}
\]
It is shown in the proof of Proposition \ref{prop:SPM} that 
$
\|B-C\|_F^2 \,=\, \|n^{-1/2}U_1' X V_1 - \tilde Y  \|_F^2 \,\topr\, 0
$ as $n \to \infty$.
Substituting $u_j(n^{-1/2}U_1' X V) = U_1' u_j(X)$ then yields
\begin{equation}
\labelA{eq:fromwedin}
	\big\langle u_j(\tilde Y) , U_1' u_j(X) \big\rangle^2 \,\topr\, 1.
\end{equation}

Fix $k=1,\ldots,r$. As $\tau_j > 1+\sqrt{c}$ \, Theorem~\ref{thm:Paul} shows that $\langle u_j(\widehat T), e_k \rangle^2$ has a non-random limit in probability,
which we will denote by $\theta_{jk}^2$. 
The relation $\tau_j = \lambda_j^2(A) + 1$ \,implies that\, $\theta_{jk}^2 = 
[1-c \lambda_j^{-4}(A)] / [ 1 + c \lambda_j^{-2}(A)]$ \, if \, $j=k$, \,and\, $\theta_{jk}^2 = 0$  otherwise.
As $u_j(\widehat T) = u_j(X)$, it follows that
\[
	\big\langle u_j(X), e_k \big\rangle^2 \, \topr \, \theta_{jk}^2.
\]
Recall that the matrix $U_1$ consists of the left singular vectors of $X_1$, i.e. $u_k(X_1) = U_1 e_k$. It is shown in (\ref{eq:sveconv2}) that $\langle  U_1 e_k, e_k \rangle^2=\langle u_k(X_1), e_k \rangle^2  \topr 1$, so we can replace $e_k$ by $U_1 e_k$ in the previous display to obtain
\[
	\big\langle u_j(X), U_1 e_k \big\rangle^2 \, \topr \, \theta_{jk}^2.
\]
It the follows from the basic properties of inner products that
\[
	\big\langle  U_1' u_j(X), e_k \big\rangle^2 \, \topr \, \theta_{jk}^2.
\]
Using the result (\ref{eq:fromwedin}) of Wedin's theorem we may replace the left term in the inner product by $u_j(\tilde Y)$, which yields
\[
	\big\langle  u_j(\tilde Y), e_k \big\rangle^2 \, \topr \, \theta_{jk}^2.
\]
As $A = \mbox{diag}(\lambda_1(A), \ldots, \lambda_r(A), 0, \ldots, 0)$ we have $e_k = u_k(A)$. 
By construction the matrix $\tilde Y$ has the same distribution as $Y$, so it follows from the last display that 
\[
	\big\langle  u_j(Y), u_k(A) \big\rangle^2 \, \topr \, \theta_{jk}^2,
\]
which is equivalent to the statement of Proposition \ref{prop:Paul} for the left singular vectors. The statement for the right singular vectors follows from consideration of the transposed reconstruction problem.
\vskip 0.2in
Now we find such $\delta>0$ that for the fixed $j$ the conditions of Wedin's theorem are satisfied with probability going to 1.
It follows from Proposition~$\ref{prop:SPM}$ that for $k=1,\ldots,r$ the $k$-th singular value of $Y$ has a non-random limit in probability
\[
\lambda_k^*
\,=\, 
\lim \lambda_k( n^{-1/2} X ) 
\,=\, 
\lim \lambda_k(\tilde Y).
\]

Let $r_0$ be the number of eigenvalues of $A$ such that $\lambda_k(A) > \sqrt[4]{c}$ (i.e.~the inequality holds only for $k = 1,\ldots,r_0$). It follows from the formula for $\lambda_k^*$ that $\lambda_k^* > 1 + \sqrt{c}$ for $k=1,\ldots,r_0$. Note also that in this case $\lambda_k^*$ is a strictly increasing function of $\lambda_k(A)$. 
All non-zero $\lambda_j(A)$ are distinct by assumption, so all $\lambda_k^*$ are distinct for $k=1,\ldots,r_0$.
Note that $\lambda^*_{r_0+1} = 1+\sqrt{c}$ is smaller that $\lambda_{r_0}^*$. 
Thus the limits of the first $r_0$ singular values of $Y$ are not only distinct, they are bounded away from all other singular values.
 Define
\[
\delta 
\,=\,
\frac{1}{3} \min_{k=1,...,r_0} (\lambda^*_k - \lambda^*_{k+1})
\,>\,
0.
\]
For any $k = 1,\ldots,r_0+1$ the following inequalities are satisfied with probability going to 1 as $n \to \infty$ 
\begin{equation}
\label{eq:pre1}
	| \lambda_k(Y) - \lambda^*_k | \,<\, \delta
\quad
\mbox{ and }
\quad
	| \lambda_k(n^{-1/2}X) -  \lambda^*_k| \,<\, \delta.
\end{equation}
In applying Wedin's theorem to $B = \tilde Y$ and $C = n^{-1/2} U_1'XV_1$
we must verify that for any $j=1,\ldots,r_0$ its two conditions are satisfied with probability going to 1.
The first condition is $\lambda_j(C) > \delta$. When inequalities (\ref{eq:pre1}) hold 
\begin{equation}
\begin{split}
\lambda_j(C) 
\, & =\, 
\lambda_j(n^{-1/2} U_1'XV_1)
\,=\, 
\lambda_j(n^{-1/2}X)
\,>\,
\lambda^*_j - \delta
\\ & >\, 
(\lambda^*_j-\lambda^*_{j+1}) - \delta
\,>\, 
3 \delta - \delta 
\,=\,
2\delta,
\end{split}
\end{equation}
so the first condition is satisfied with probability going to 1.
The second condition is $|\lambda_j(C) - \lambda_k(B)| > \delta$ for all $k \neq j$.
It is sufficient to check the condition for $k=1,\ldots,r_0+1$ as asymptotically $\lambda_j(C) > \lambda_{r_0+1}(B)$.
From the definition of $\delta$ and the triangle inequality we get
\[
3 \delta 
\,<\,
|\lambda^*_j - \lambda^*_k| 
\,\leq\,
|\lambda^*_j - \lambda_j(n^{-1/2}X)| 
\,+\,
|\lambda_j(n^{-1/2}X) - \lambda_k(\tilde Y)| 
\,+\,
|\lambda_k(\tilde Y) - \lambda^*_j|
.
\]
When inequalities (\ref{eq:pre1}) hold the first and the last terms on the right hand side sum are no larger than $\delta$, thus
\[
3 \delta 
\,<\,
\delta \,+\,
|\lambda_j(n^{-1/2}X) - \lambda_k(\tilde Y)| 
\,+\, \delta
.
\]
It follows that the second condition $|\lambda_j(C) - \lambda_k(B)| = |\lambda_j(n^{-1/2}X) - \lambda_k(\tilde Y)| > \delta$ also holds with probability going to 1.
\end{proof}

\bibliographystyle{acmtrans}
\bibliography{RMT}

\end{document}